\newcommand{\sign}{\mathrm{sgn}}
\newcommand{\trans}[1]{{#1}^{\ensuremath{\mathsf{T}}}} 
\newcommand{\Tr}{\mathrm{Tr}}
\newcommand{\betaout}{\beta_{\rm out}}
\newcommand{\betaoutmod}{\bar{\beta}_{\rm out}}
\newcommand{\CapG}{C^{\mathrm{G}}}
\newcommand{\CapGchi}{C^\mathrm{G}_\chi}
\newcommand{\chG}{\Phi}
\newcommand{\chPS}{\chG^{\rm F}}
\newcommand{\chiG}{\chi^{\mathrm{G}}}
\newcommand{\disp}{\bm{d}}
\newcommand{\dispin}{\disp_{\rm in}}
\newcommand{\dispout}{\disp_{\rm out}}
\newcommand{\dispch}{\disp_{\rm env}}
\newcommand{\CM}{\bm{V}}
\newcommand{\CMin}{\CM_{\rm in}}
\newcommand{\CMinmod}{\bar{\CM}_{\rm in}}
\newcommand{\CMenv}{\CM_{\rm env}}
\newcommand{\CMout}{\CM_{\rm out}}
\newcommand{\CMoutmod}{\bar{\CM}_{\rm out}}
\newcommand{\eff}{\bm{Y}}
\newcommand{\effPS}{\eff_{\rm F}}
\newcommand{\fr}{\omega}
\newcommand{\frenv}{\omega_{\rm env}}
\newcommand{\frthr}{\omega_{\rm thr}}
\newcommand{\frin}{\omega_{\rm in}}
\newcommand{\frininf}{\omega_{\infty}}
\newcommand{\frintauinf}{\omega_{\rm in}^{(\tau_\infty)}}
\newcommand{\frinmod}{\bar{\omega}_{\rm in}}
\newcommand{\frout}{\omega_{\rm out}}
\newcommand{\froutmod}{\bar{\omega}_{\rm out}}
\newcommand{\M}{M}
\newcommand{\Menv}{M_{\rm env}}
\newcommand{\Menvcrit}{M_{\rm c}}
\newcommand{\Minmod}{\bar{M}_{\rm in}}
\newcommand{\Mout}{M_{\rm out}}
\newcommand{\Moutmod}{\bar{M}_{\rm out}}
\newcommand{\epp}{e_{\rm p}}
\newcommand{\Nthr}{\bar{N}_{\rm thr}}
\newcommand{\Ninmod}{\bar{N}}
\newcommand{\Ninmodcrit}{\bar{N}_{\rm c}}
\newcommand{\Nenv}{N_{\rm env}}
\newcommand{\Noutmod}{\bar{N}_{\rm out}}
\newcommand{\se}{\nu}
\newcommand{\Sim}{\bm S}
\newcommand{\X}{\bm{X}}
\newcommand{\XPS}{\X_{\rm F}}
\newcounter{thm} 
\newcounter{corr} 
\newcounter{lem} 
\newtheorem{lemPipeline}[lem]{Lemma}
\begin{document}
%
\title{CLASSICAL CAPACITY OF PHASE-SENSITIVE GAUSSIAN QUANTUM CHANNELS}
%
%
%

\author{Joachim~Sch\"afer, 
        Evgueni~Karpov, 
        Oleg~V.~Pilyavets,
        and~Nicolas~J.~Cerf
\thanks{This work was supported by the
Belgian federal government via the IAP research network Photonics$@$be grant number P7/35,
by the Belgian foundation FRIA, and
by the ULB funding ``Bourses Post-doctorales IN -- Ouvertures internationales.''}
\thanks{J.~Sch\"afer and O.~V.~Pilyavets were with
the Universit\'e libre de Bruxelles, 50 avenue F.~D.~Roosevelt, 1050 Ixelles,
Brussels, Belgium (e-mail: mail@joachimschaefer.de; pilyavets@gmail.com).}
\thanks{E.~Karpov and N.~J.~Cerf are with
the Universit\'e libre de Bruxelles, 50 avenue F.~D.~Roosevelt, 1050 Ixelles,
Brussels, Belgium (e-mail: \mbox{ekarpov@ulb.ac.be}; ncerf@ulb.ac.be).}
}

%
%





\markboth{CLASSICAL CAPACITY OF PHASE-SENSITIVE GAUSSIAN QUANTUM CHANNELS}%
{SCH\"AFER \MakeLowercase{\textit{et al.}}:
CLASSICAL CAPACITY OF PHASE-SENSITIVE GAUSSIAN QUANTUM CHANNELS}


%



\maketitle

\begin{abstract}
The full solution of the optimization problem giving the
Gaussian capacity of the single-mode fiducial Gaussian quantum
channel is provided. Since it was shown that the Gaussian
capacity of an arbitrary (phase-sensitive or insensitive)
single-mode Gaussian quantum channel is equal to the
Gaussian capacity of this fiducial channel, the solution
presented in this work can be regarded as universal. The
analytical study of this solution, below and above the energy
threshold, shows that the dependence of the Gaussian capacity
on the environment noise squeezing is not monotonic. In
particular, the capacity may have a saddle point, one or two
extrema at finite squeezing, or be a monotonically increasing
function of the squeezing parameter. The exact dependence is
defined by the determinant of the noise covariance matrix and
by the transmissivity (or gain) of the fiducial Gaussian
channel. 
\end{abstract}

\begin{IEEEkeywords}
Classical capacity of quantum channels,
Gaussian quantum channels,
quantum information.
\end{IEEEkeywords}

%
\IEEEpeerreviewmaketitle

\section{Introduction}

%
%
%
%

 

\IEEEPARstart{I}{nformation} transmission through noisy communication channels
is a key problem of Information Theory. When the quantum nature of information carriers is taken into account, their interaction with the noisy environment is
properly described in terms of quantum channels. The classical capacity, i.e.,
the maximal rate of errorless transmission of classical information via a
quantum channel is one of the important figures of merit. 
In general, finding this quantity
is 
a highly non-trivial optimization problem. Nevertheless, this
problem was recently solved for a wide class of quantum channels with a Gaussian
noise~(so-called Gaussian quantum channels)~\cite{GGCH13,GHG13} which 
are good models of common physical systems such
as free space or fiber optics communications 
(see, e.g.,~\cite{CD94}).

The Gaussian quantum channels attracted much attention in the last
decades~\cite{GGCH13,GHG13,CD94,YuenOzawa,
GGLMS04a,H98,HSH99,HW01,SH02,SH03,GGLMS04b,BDM05,GM05,
CCMR05,H06,CGH06,EW07,H07,SH07,WHTH07,GSE07,PZM08,LPM09,SDKC09,YS80,PLM09,
SKC10,GHLM10,G11,SKC11,ISS11,GNLSC12,SKC12,GLMS12,WPGCRSL12,KS13,KG13,H14,
BradlerAdamiPRD,BardhanIEEE,BardhanIEEEConf,BradlerJPA,AIP2014,SKPPC13,S13};
however, 
only a few results on their classical capacity were known~\cite{YuenOzawa,
GGLMS04b,LPM09}.
One of the attempts to approach this problem  was to restrict 
the search for optimal states and encodings by Gaussian states and Gaussian
distributions~\cite{EW07}.  
The solution of such reduced optimization problem 
gives the so-called Gaussian classical
capacity~\cite{SKPPC13,H14}
which was actively studied during the last years~\cite{LPM09,EW07,SKPPC13,
GM05,HSH99,SH02,SH03,CCMR05,H06,SH07,PZM08,SDKC09,PLM09,SKC10,SKC11,SKC12,
AIP2014,S13}. Although the
Gaussian capacity\footnote{Since other capacities are not
disscussed in this work, the word ``classical'' will be usually omitted.}
is \emph{a priori} a lower bound on the actual 
capacity this quantity is of great interest in practice,
because the Gaussian states are more accessible in experimental
realizations.

Recently, the progress
in finding the Gaussian capacity 
became even more important
because it was proved that the classical capacity of bosonic Gaussian quantum
channels with thermal noise is indeed achieved by Gaussian ensembles of (pure)
Gaussian input states~\cite{GGCH13,GHG13}.\footnote{The argument of that proof was
based on the 
minimum output entropy conjecture \cite{GGLMS04a,GSE07,GHLM10,GNLSC12} which was also
recently proved~\cite{GGCH13,GHG13}.} 
If input energy is sufficiently high,
this result also leads to finding the classical capacity
in the case of squeezed thermal noise~\cite{GGCH13,H14}.
In both cases\footnote{They are known as the third stage cases~\cite{PLM09}.}
the calculations were technically performed in terms of the Gaussian
capacity.  However, the equivalence of the Gaussian
and classical capacity is not yet known for small input energies.\footnote{The minimum
output entropy conjecture does not help to find the proof for this case.} 
The study of Gaussian capacity in this
region~\cite{LPM09,SKC10,SKC11,PLM09,S13,SKC12,AIP2014}
provides a way to find a lower bound on the classical capacity with
the hope that 
equivalence of this bound and the (actual) classical capacity
will be eventually proven as well. In addition,
the analysis of the properties of
the Gaussian capacity may give a further insight helping to find such a proof.

In this work we focus on the Gaussian classical capacity of single-mode Gaussian quantum 
channels and 
calculate the one-shot (or single-letter)
Gaussian capacity of the fiducial channel~\cite{SKPPC13}.
Recently, it was shown that this
is equivalent to finding the
Gaussian capacity of an arbitrary single-mode Gaussian channel~\cite{SKPPC13,S13}.
This equivalence
is based on the fact that any single-mode Gaussian channel can be expressed as
the fiducial channel preceded by a rotation and followed by a symplectic
transformation (in some cases---only if a proper limit is also taken). 
We also go beyond the known results on
the Gaussian capacity and present the solution which is valid for all
values of input energy.
Then, we analyze the obtained Gaussian capacity as a function of the
channel parameters. In particular, we find its rich behavior as a function of
the noise squeezing. The present work generalizes previous results obtained for
the single-mode lossy (or attenuation)~\cite{LPM09,PZM08,PLM09}
and additive noise~\cite{SDKC09,SKC10,SKC11} Gaussian channels.

The paper is structured as follows. In Section~\ref{sec:chgauss}, we briefly
review Gaussian quantum states and Gaussian channels.
In Section~\ref{sec:capgauss}, we introduce the single-mode fiducial Gaussian
channel and calculate its Gaussian capacity. In this calculation we use a \emph{frequency
representation}\footnote{It was inspired by $\lambda$-representation introduced
in~\cite{PLM09} for lossy channel.} for the covariance matrices
to get a better insight about the physical meaning of the solution. 
In Section ~\ref{sec:chpara}, the
behavior of the capacity as a function of the channel parameters is analyzed.
Section~\ref{sec:conclusions} summarizes the results.

\section{Gaussian channels and Gaussian states}\label{sec:chgauss}

Quantum mechanics in continuous variables can be introduced as Weyl calculus
on the system's phase space $(q,p)$~\cite{DeGosson} (below we consider only
single-mode case, i.e., 1-D harmonic oscillator).  In this representation a
quantum state is specified by its Wigner function $W(\mathbf{x})$
[$\mathbf{x}=(q,p)$ is a vector of canonical variables $q$ and~$p$], which is a
Weyl symbol of the density operator $\hat\rho$ acting on the Hilbert
space\footnote{We assume that canonical commutation relation for quadratures
operators $\hat q$ and $\hat p$ belonging to $\mathcal{H}$ is $[\hat{q},\hat{p}]
= i$ ($\hbar=1$), and normalization of a Wigner function is $\int
W(\mathbf{x})\,d\,\mathbf{x}=2\pi$.} $\mathcal{H}\equiv L_2(\mathbb{R})$.

A Gaussian probability density
density
\begin{equation*}
W(\mathbf{x})=\frac1{\sqrt{\det\mathbf{V}}}e^{-\frac12\left(
\mathbf{x}-\mathbf{d},\mathbf{V}^{-1}(\mathbf{x}-\mathbf{d})
\right)}
\end{equation*}
with the mean $\mathbf{d}$ 
is the Wigner function of a quantum state if $\mathbf{V}$ is 
a quantum covariance matrix (see, e.g.,~\cite{DeGosson2}), i.e.,
its symplectic eigenvalue $\se=\sqrt{\det{\CM}}$ satisfies $\se\ge1/2$~\cite{DeGosson}.
Such quantum states are called Gaussian. Since their higher momenta are
functions of the first two, these states are completely specified by the mean
value $\mathbf{d}$ and the covariance matrix $\mathbf{V}$.


The Gaussian channel $\chG$ is defined as a completely positive trace preserving map (CPTP) which maps Gaussian states to Gaussian states \cite{H07}. Its action on the first two momenta of any state is completely defined by the real matrix $\X$, a
displacement $\dispch$ and the real, symmetric and positive matrix $\eff$, where $\eff$ can be considered as the covariance matrix of an \emph{effective noise}. The three parameters determine the action of $\chG$ on an input state with displacement $\dispin$
and covariance matrix $\CMin$ leading to the output state determined by displacement $\dispout$ and covariance matrix $\CMout$, i.e. \cite{CGH06}
\begin{equation}\label{eq:GCH} 
	\begin{split} 
		\dispout & =\X \dispin + \dispch,\\ 
		\CMout & =\X \CMin \trans{\X} + \eff, 
	\end{split} 
\end{equation} 
If the input state is Gaussian, given by $\dispin$ and $\CMin$, then the output state of the channel is also a Gaussian state given by $\dispout$ and $\CMout$. In order to satisfy the conditions of a CPTP map $\Phi$ has to satisfy \cite{HW01}
\begin{equation}
	\eff+\frac{i}{2}\left(\bm{\Omega}-\X \bm{\Omega} \trans{\X}\right) \ge 0, 
\end{equation}
where $\bm{\Omega}$ is the symplectic matrix
\begin{equation}
	\bm{\Omega} = 
	\begin{pmatrix}
		0 & 1\\-1 & 0
	\end{pmatrix}.
\end{equation}
In the following, the parameters 
\begin{equation}\label{eq:tauy}
	\tau = \det{\X}, \quad y = \sqrt{\det{\eff}},
\end{equation}
will be of a particular interest because $\tau$ may be considered as a channel transmissivity (if $0 \le \tau\le1$) or amplification gain (if $\tau\ge1$), while $y$ characterizes the added noise.

Note that any single-mode covariance matrix $\CM$ can be diagonalized by a symplectic transformation corresponding to a rotation in the phase space which preserves the \emph{symplectic eigenvalue} $\se=\sqrt{\det{\CM}}$, as well as the
degree of squeezing of the state:
\begin{equation}\label{eq:genCM}
	\Sim\CM\Sim^{\rm T} =  \left(\M+\frac{1}{2}\right)
	\begin{pmatrix}
		 \fr^{-1} & 0\\
		0 & \fr
	\end{pmatrix}.
\end{equation}
Here the parameter $\fr$ accounts for the amount of squeezing of the state with CM $\CM$ and 
\begin{equation}\label{eq:Mth}
    \M\equiv\se-\frac{1}{2}=\sqrt{\det{\CM}}-\frac{1}{2}.
\end{equation}
The meaning of the coefficient $M$ will be clear if we compare it with the number of photons $N$
defined as the expectation value of the number operator $\hat N = \hat{a}^\dag \hat{a}$ which it takes in the state $\hat{\rho}$. 
The number of photons is related to the energy of the state given by the trace of its covariance matrix as follows
\begin{equation}\label{eq:Nphot}
  N\equiv \Tr[\hat{a}^\dag \hat{a}\hat{\rho}]=\frac{1}{2}(\Tr[\CM]-1).
\end{equation}
For a thermal state, i.e. $\fr=1$, equations \eqref{eq:Mth} and \eqref{eq:Nphot} imply that $\M = N$. Furthermore, let us notice that for arbitrary $\fr$ the covariance matrix $\CM$ in the diagonal form \eqref{eq:genCM} is identical (up to dimensional
factors made of $\hbar$) to the covariance matrix (expressed in natural units) of a thermal single-mode state of $\M$ photons and \textit{optical frequency} (see, for example, \cite{HSH99}). Inspired by
this similarity we call $\M$ the number of \emph{thermal photons} and $\fr$ the ``frequency'' of the state. Thus all pure states have zero temperature and zero thermal photons, $\M=0$.
Although in our case $\fr$ does not represent any optical frequency the similarity has a deep physical root. Indeed, given the number of thermal photons $\M$, the energy of the state is determined by both the optical frequency and the degree of squeezing.


The symplectic eigenvalue $\se$ and the number of thermal photons $\M$ of the covariance matrix $\CM$ of a single mode state are important parameters in what follows since they can be used to calculate the von Neumann
entropy of Gaussian states. Recall that the density matrix $\hat{\rho}$ of a thermal state of $M$ photons is diagonal in the Fock basis and therefore its von Neumann entropy
$S(\hat{\rho})=-\Tr[\hat{\rho}\log\hat{\rho}]$ may be calculated straightforwardly:
\begin{equation}\label{eq:VNE}
S(\hat{\rho})=g(\se-1/2)=g(\M),
\end{equation} 
where
\begin{equation}\label{eq:gx}
\begin{split}
	g(x) & =(x+1)\log_2(x+1)-x\log_2 x, \quad x>0.\\ 
	g(0) & = 0.
	\end{split}
\end{equation} 
Thus, the von Neumann entropy of a thermal state can be expressed as a function of the symplectic eigenvalue $\nu$ or the number of thermal photons $\M$. Then, any CM $\CM$ of the form \eqref{eq:genCM} can be transformed by a
symplectic (squeezing) transformation to the CM of a thermal state ($\fr=1$) with the same number of thermal photons $\M$ as the initial $\CM$. Noting that this transformation corresponds to a unitary
transformation of the corresponding density operator which preserves the von Neumann entropy, we recover a well known fact that the von Neumann entropy of any single-mode Gaussian state is a function of the number
of thermal photons $\M$ given by Eq.~\eqref{eq:VNE} 


\section{Gaussian capacity of the fiducial channel}\label{sec:capgauss}

The one-shot (single-letter) \emph{Gaussian capacity} is the maximal transmission rate of classical information when we are restricted to the use of Gaussian input states and Gaussian encodings such that different uses of the channel are not correlated. The energy constraint is always imposed on the input state because the unconstrained capacity is trivially infinite. The recently proven additivity
conjecture together with the proof of the minimum output entropy conjecture (see Refs.~\cite{GHG13,GGCH13}) assert that the Gaussian capacity is the classical capacity for phase-insensitive Gaussian channels (in case of squeezed Gaussian noise this holds 
if the input energy exceeds a certain threshold). 
However, in the absence of the full proof embracing all Gaussian channels and the entire input energy regime we distinguish the Gaussian capacity $\CapGchi(\chG)$ from the classical capacity $C_\chi(\chG)$. The former
reads~\cite{EW07,SKPPC13}:
\begin{equation}\label{eq:capgauss}
	\begin{split}
		\CapGchi(\chG,\Ninmod) & =\max_{\{\CMin,\CMinmod : \Tr[\CMinmod]=2\Ninmod+1\}}\chi_G,\\
		\chi_G & = g\left(\Moutmod\right)-g\left(\Mout\right).
	\end{split}
\end{equation}
where $\CMin$ is the CM of the input (letter) state and $\CMinmod$ is the CM of the average modulated input state with $\Ninmod$ photons. Note that we state in Eq.~\eqref{eq:capgauss} the explicit dependence of the Gaussian capacity on $\Ninmod$. 
The average (modulated) input state with CM $\CMinmod$ is realized by a Gaussian distribution of the displacements of pure Gaussian states with the same CM $\CMin$ (see, for example, \cite{SKPPC13}). That is why entropies which enter the general formula for the
capacity are expressed in terms of functions of thermal photons $\Mout$ and $\Moutmod$ corresponding to the \emph{output} and \emph{modulated output states} with CM $\CMout$ and $\CMoutmod$ given by
\begin{equation}\label{eq:CMoutCMoutmod}
	\CMout=\chG[\CMin], \quad \CMoutmod=\chG[\CMinmod].
\end{equation}
The input energy constraint is expressed in terms of the mean number of photons $\Ninmod$ in the modulated input state.

We showed in~\cite{SKPPC13} that the problem of calculating the Gaussian capacity of an arbitrary Gaussian channel $\chG$ can be reduced to calculating the capacity of a \emph{fiducial} Gaussian channel $\chPS_{(\tau,y,\frenv)}$, which is defined in our parametrization by
\begin{equation}\label{eq:chPS}
	\XPS = \begin{pmatrix}
		\sqrt{|\tau|} & 0\\0 & \sign(\tau) \sqrt{|\tau|} 
	\end{pmatrix}, \quad
	\effPS = y\begin{pmatrix}
		 \frenv^{-1} & 0\\
		0 & \frenv
	\end{pmatrix},
\end{equation}
where parameter $y$ is a function of $\tau$, namely, 
\begin{equation}\label{eq:ybytauMenv}
	y = \left\{ 
	  \begin{array}{ll}
		|1-\tau|\left(\Menv+\frac{1}{2}\right), & \quad \tau \ne 1,\\
		\Menv, & \quad \tau = 1.\\
	\end{array} 
		\right.
\end{equation}
Note that $\tau=1$ and $\Menv=0$ correspond to the perfect transmission channel. Otherwise, $0 \le \tau \le 1$ corresponds to a lossy channel, $\tau=1$ a classical additive noise channel, $\tau > 1$ an
amplification channel, and $\tau < 0$ a phase-conjugating channel. Although in the case of the classical additive noise channel the CM of the noise $\effPS$ defined in \eqref{eq:chPS} strictly speaking does not correspond 
to a valid ``state'' of the environment, we will loosely apply the term ``squeezing'' parameter to $\frenv$ for all channels.

The solution of the optimization problem \eqref{eq:capgauss} has already been obtained and intensively discussed for the lossy channel and classical additive noise channel (see~\cite{PZM08,LPM09,PLM09}
and~\cite{SDKC09,SKC10,SKC11}, respectively). In the present paper we generalize those results to the most general single-mode Gaussian channel.

In Eq.~\eqref{eq:capgauss} one has to maximize over the two covariance matrices $\CMin$ and $\CMinmod$ with the energy constraint $\Ninmod$ stated in Eq.~\eqref{eq:capgauss}, and the
constraint that the input state is pure, i.e. 
\begin{equation}\label{eq:purity}
	\det{\CMin}=\frac{1}{4}.	
\end{equation}
We can straightforwardly use Theorem 1 of~\cite{PLM09} (since the resulting equations are of the same form) to conclude that the optimal covariance
matrices $\CMin$ and $\CMinmod$ are diagonal in the same basis as the \emph{effective noise} CM $\effPS$. Therefore, in the following any covariance matrix $\CM$ will be considered to be diagonal. As a consequence,
the CM of the optimal output state and optimal modulated output state read
\begin{equation}\label{eq:CMoutdiag}
	\CMout = |\tau|\CMin + \effPS, \quad \CMoutmod = |\tau|\CMinmod + \effPS.
\end{equation}
where
\begin{equation}\label{eq:CMindiag}
	\CMin = \frac{1}{2}\begin{pmatrix}
		\frin^{-1} & 0\\0 & \frin
	\end{pmatrix},
	\CMinmod = \left(\Ninmod+\frac{1}{2}\right)\begin{pmatrix}
		\frinmod^{-1} & 0\\0 & \frinmod
	\end{pmatrix}.
\end{equation}
Now we present the solution to the optimization problem stated in Eq.~\eqref{eq:capgauss}. First, we derive a solution that will only be valid above an input energy threshold, and secondly, we present the solution for input energies below this threshold. 

\subsection{Solution above the input energy threshold}\label{sec:solat}
We start from an obvious bound on the Gaussian capacity, which follows from Eq.~\eqref{eq:capgauss} and the constraints discussed above
\begin{equation}\label{eq:capgbound}
	\CapGchi \le \max_{\{\CMinmod:\Tr[\CMinmod]=2\Ninmod+1\}} g(\Moutmod)  - \min_{\{\CMin:\det\CMin=1/4\}} g(\Mout).
\end{equation}
We would obtain the Gaussian capacity if we presented $\CMin$ and $\CMinmod$ which saturate this bound. This idea was used, for example, in \cite{H06}. However, by considering implications of the energy constraint in \eqref{eq:capgauss}, which $\CMinmod$ should respect, we will find the limitations of this method.
 
Let us start by finding the maximum of the first term of Eq.~\eqref{eq:capgbound} using the fact that the two terms are independent. Since $g(x)$ is a monotonously increasing function it is sufficient to
maximize its argument in order maximize its value. We can express the argument $\Moutmod$ in the first term of \eqref{eq:capgbound} in terms of the mean number of photons $\Noutmod$ of the modulated output state,
i.e.
\begin{equation}\label{eq:MoutmodNoutmod}
	\begin{split}
		\Moutmod & = \frac{2\Noutmod + 1}{\froutmod^{-1}+\froutmod} - \frac{1}{2},\\
		\Noutmod & = |\tau|\left(\Ninmod+\frac{1}{2}\right)+\frac{1}{2}\left(\Tr[\effPS] -1\right).
	\end{split}	
\end{equation}
Then, since $\Noutmod$ is fixed for a given number of input photons $\Ninmod$ and channel parameters $\tau$, $y$, and $\frenv$ (see Eq.~\eqref{eq:chPS}) the only free parameter in $\Moutmod$ is the frequency $\froutmod$ of the modulated output state. Its optimal value, which maximizes $\Moutmod$ in Eq.~\eqref{eq:MoutmodNoutmod}, is easily found:
\begin{equation}\label{eq:freqoutmodat}
	\froutmod=1, 
\end{equation}
which implies that 
\begin{equation}\label{eq:Moutmodat}
	\Moutmod = \Noutmod.
\end{equation}
This corresponds to the well-known fact that given a finite energy the thermal state maximizes the von Neumann entropy. 

Next we find the minimum of the second term on the right hand side of Eq.~\eqref{eq:capgbound}. In the same way as we maximized the first term by maximizing the argument $\Moutmod$, it is enough here to minimize
the argument $\Mout$. The number of thermal photons at the output $\Mout$ can be expressed in terms of the input frequency $\frin$ from definitions \eqref{eq:Mth} and \eqref{eq:CMoutdiag}, where $\frin$ is
the only free parameter.
Then it is straightforward to find that the value of $\Mout$ is minimized when the squeezing parameter of the input state is equal to the one of the noise
\begin{equation}\label{eq:resonance}
	\frin=\frenv.
\end{equation}
In terms of frequencies one may say that the input and environment modes are ``in resonance''. This also means that the optimal input state should have the same squeezing as the state of the environment.
Together with the uniform distribution of energy between the quadratures of the modulated output (thermal) state this is also a single-mode form of \emph{quantum water-filling}; a type of solution found also
in multimode problems, which corresponds to equal variances of all quadratures of all modes \cite{HSH99}, \cite{SDKC09,PLM09,SKC10,SKC11}.

Using Eq.~\eqref{eq:resonance} we can express $\Mout$ which corresponds to this solution in a simple form
\begin{equation}\label{eq:MoutWF}
	\Mout+\frac{1}{2}=\frac{|\tau|}{2}+y. 
\end{equation}
Together with Eqs.~\eqref{eq:chPS}-\eqref{eq:CMoutdiag} this implies furthermore, that the input and the output state are 
``in resonance'' as well, i.e.
\begin{equation}\label{eq:resonanceOut}
	\frin=\frout.
\end{equation}

The CM of the modulated input state can be expressed from the obtained results using the equation
\begin{equation}\label{eq:vmod}
	\CMoutmod-\CMout=|\tau|(\CMinmod-\CMin),
\end{equation} 
which directly follows from Eq.~\eqref{eq:CMoutdiag}. Then the number of thermal photons $\Minmod$ and the frequency $\frinmod$ are obtained following definitions \eqref{eq:genCM} and \eqref{eq:Mth} from the CM
$\CMinmod$ and the input energy constraint in the form
\begin{equation}\label{eq:Minmodbyfrinmod}
	\Minmod+\frac{1}{2}  = \frac{2\Ninmod+1}{\frinmod^{-1}+\frinmod}.
\end{equation}
After several steps we obtain
\begin{equation}\label{eq:frinmodat}
	\frinmod = \sqrt{\frac{|\tau|(2\Ninmod+1)+y(\frenv^{-1}-\frenv)}{|\tau|(2\Ninmod+1)-y(\frenv^{-1}-\frenv)}}
\end{equation}
and
\begin{equation}\label{eq:Minmodat}
	\Minmod + \frac{1}{2} = \frac{1}{2}\sqrt{(2\Ninmod+1)^2-\frac{y^2}{\tau^2}\left(\frenv^{-1} - \frenv\right)^2} .
\end{equation}

Here we should make an important note that the modulated state with CM $\CMinmod$ is obtained by averaging over displacements of the input letter state with CM $\CMin$ and therefore, the right hand side of
Eq.~\eqref{eq:vmod} should be a positive definite matrix. This requirement together with Eqs.~\eqref{eq:freqoutmodat} and \eqref{eq:resonance} can be jointly satisfied by some $\CMin$ and $\CMinmod$ only if the
input energy is above the following threshold:
\begin{equation}\label{eq:Nthr}
	\Nthr = \frac{1}{2\frenv}\left[ 1 + \frac{y}{|\tau|}\left|1 - \frenv^2\right| \right] - \frac{1}{2}.
\end{equation}
Hence, only for $\Ninmod \ge \Nthr$, the covariance matrices $\CMin$ and $\CMinmod$ found above determine the input letter state and modulated average input state which saturate the upper bound on the Gaussian
capacity \eqref{eq:capgbound}, and therefore realize the Gaussian capacity. This shows the limitation of the attempt to minimize and maximize terms in Eq.~\eqref{eq:capgauss} independently.

Using the results obtained above we can express in a closed form the Gaussian capacity of the fiducial channel as a function of the channel parameters and the input energy constraint: inserting into
Eq.~\eqref{eq:capgauss} the optimal values of $\Mout$ \eqref{eq:MoutWF} and $\Moutmod$ \eqref{eq:Moutmodat} where $\Noutmod$ is given by \eqref{eq:MoutmodNoutmod} one obtains
\begin{equation}\label{eq:capat}
	\begin{split}
		\CapGchi & = g\left(|\tau|\left(\Ninmod+\frac{1}{2}\right)+\frac{1}{2}\left(\Tr[\effPS] -1\right)\right)\\
		& -g\left(|\tau|\frac{1}{2}+y-\frac{1}{2}\right), \; \Ninmod \ge \Nthr.
	\end{split}	
\end{equation}
From Eqs.~\eqref{eq:frinmodat} and \eqref{eq:Minmodat} we can easily recover the well-known solution for thermal noise ($\frenv = 1$), which implies $\Minmod=\Ninmod$ and $\frin=\frinmod=1$ meaning that the optimal encoding corresponds to a displacement of coherent input states for so-called \emph{thermal} or \emph{phase-insensitive} channels (see \cite{SKPPC13}). 

Thanks to the recently proven minimum output entropy conjecture \cite{GHG13,GGCH13} Eq.~\eqref{eq:capat} provides not only the Gaussian capacity but the classical capacity for an arbitrary single-mode Gaussian channel $\chG$ as proven in \cite{SKPPC13,H14}, i.e. 
\begin{equation}\label{eq:CapGequalCap}
	\CapGchi(\chG) = C_\chi(\chG) = C(\chG), \; \Ninmod \ge \Nthr.
\end{equation}

\subsection{Solution below the input energy threshold}\label{sec:solbt}
Now we derive the solution for the case when the input energy is not enough to satisfy the water-filling solution. Up to now a solution for this case is known only for the lossy \cite{PLM09} and additive noise
channels \cite{SKC11}.

By definition of $\Nthr$ the condition $\Ninmod < \Nthr$ implies that Eqs.~\eqref{eq:freqoutmodat} and \eqref{eq:resonance} could only be satisfied if (at least) one of the diagonal terms of the difference
$\CMinmod - \CMin$ was negative. However, this would be unphysical because $\CMinmod$ is the CM of a state which is a mixture of states with CM $\CMin$. This means that the formal maximum of Eq.~\eqref{eq:capgauss} lays
outside the physically admissible region of optimization parameters and therefore the real (physically admissible) maximum lays in fact at the boundary of this region. This boundary is formed by the modulated input
CM $\CMinmod$ with (at least) one diagonal term being equal to the corresponding diagonal term of the non-modulated input CM $\CMin$. This implies that the optimal encoding only uses one of the two quadratures for information transmission (modulating neither of them is clearly not optimal). Without loss of generality we assume that the $p$-quadrature of the noise is squeezed and therefore $\frenv < 1$. Therefore, we chose the (more noisy) $q$-quadrature
to be no longer modulated:
\begin{equation}\label{eq:modqzero}
	\left(\Minmod+\frac{1}{2} \right)\frinmod^{-1}=\frac{1}{2}\frin^{-1}, 
\end{equation}
and argue below why this choice is optimal.

Since below the input energy threshold, there is no valid $\CMin$ and $\CMinmod$ which saturate independently the minimum and maximum of \eqref{eq:capgbound}, we have to maximize the whole
expression on the right hand side of Eq.~\eqref{eq:capgauss}. Therefor we use in the following the method of Lagrange multipliers.

Note that for the case $\Ninmod \ge \Nthr$ the optimization problem can be formulated in terms of the three parameters $\frin$, $\frinmod$ and $\Minmod$ satisfying the input energy constraint~\eqref{eq:Minmodbyfrinmod} [where furthermore, only pure input states are considered, see Eq.~\eqref{eq:purity}]. 

For the case $\Ninmod < \Nthr$ we can use the new constraint given by Eq.~\eqref{eq:modqzero} to express $\Minmod$ in terms of $\frin$ and $\frinmod$, which together with Eq.~\eqref{eq:Minmodbyfrinmod} leads to
\begin{equation}\label{eq:frininmod}
	\frin = \frac{1+\frinmod^2}{2(2\Ninmod+1)}.
\end{equation}
Together with the input energy constraint the number of independent variables would be reduced to one. However, we find it more instructive to use the energy constraint in the Lagrangian with two independent variables. We shall maximize Eq.~\eqref{eq:capgauss} with respect to the pair of variables $\frin$, $\froutmod$. In addition, we will replace the input energy constraint $\Ninmod$ by the output energy
constraint $\Noutmod$ which is equivalent to the original one for fixed channel parameters $(\tau,y,\frenv)$ as one can see in Eq.~\eqref{eq:MoutmodNoutmod}. Then, our Lagrangian is given by
\begin{eqnarray}\label{eq:lagrangian}
	{\mathcal L} & = & g(\Moutmod) - g(\Mout) \\ 
		&-&  \frac{\betaoutmod}{\ln 2}\left(\Tr[\CMoutmod] - \Noutmod - \frac{1}{2}\right).\nonumber
\end{eqnarray}
The choice of the Lagrange multiplier $\betaoutmod/\ln 2$ will be justified below. An extremum of the Lagrangian fulfills
\begin{equation}
	\nabla {\mathcal L} = 0,
\end{equation}
where the gradient reads
\begin{equation}
	\nabla = \trans{\left(\frac{\partial}{\partial \frin}, \frac{\partial}{\partial \froutmod}\right)}.
\end{equation}
Before taking the derivatives, recall the condition $\Ninmod<\Nthr$, which implies Eq.~\eqref{eq:modqzero} and therefore
\begin{equation}\label{eq:Moutmodbtv1}
	\Moutmod= \froutmod \left(\frac{|\tau|}{2\frin} + \frac{y}{\frenv} \right) - \frac{1}{2}.
\end{equation}
Furthermore, $\Mout$ can be expressed using the definition \eqref{eq:Mth}. Thus, we find
\begin{eqnarray}\label{eq:lagrsystem}
\frac{\partial {\mathcal L}}{\partial \frin} 
	& = & -
	\left[\left(2 g'(\Moutmod)\frac{\froutmod}{\frin}
	-\frac{\betaoutmod}{\ln 2}\left(\froutmod^{-1} + \froutmod\right)\right)\right.\nonumber\\
	& - & \left. g'(\Mout)\frac{\frout}{\frin}\left(1-\frac{\frin^2}{\frout^2}\right)\right]\frac{|\tau|}{4\frin},\label{eq:dLdfreqin}\\
	\frac{\partial {\mathcal L}}{\partial \froutmod} & = & \left(\Moutmod+\frac{1}{2}\right)\left(g'(\Moutmod)\froutmod^{-1} - \frac{\betaoutmod}{\ln 2}\right),\label{eq:dLdfreqoutmod}
\end{eqnarray}
where the first derivative of $g(x)$ is
\begin{equation}
  g'(x)=\frac{1}{\ln 2}\log_2\left(\frac{x+1}{x}\right), 
\end{equation}
and Eq.~\eqref{eq:dLdfreqin} was simplified using Eq.~\eqref{eq:CMoutdiag}. By equalizing the right hand side of Eq.~\eqref{eq:dLdfreqoutmod} to zero we recover the Bose-Einstein statistics for the amount of \emph{thermal photons} in the modulated output state
\begin{equation}\label{eq:Moutmodbt}
	\Moutmod = \frac{1}{e^{\froutmod \betaoutmod}-1},
\end{equation}
which means that the multiplier $\betaoutmod$ can be regarded as an \emph{inverse temperature} of the state. Note that Eq.~\eqref{eq:Moutmodbt} was already derived previously in by several authors \cite{YS80}, and in
particular, for the lossy channel with phase sensitive noise in~\cite{PLM09}. The physical meaning of the Lagrange multiplier $\betaoutmod$ as an inverse temperature (intensive parameter) becomes more meaningful in the case of an ``ensemble'', which in our case corresponds to a set of parallel single-mode channels with a priori different noises \cite{PLM09,SDKC09,SKC11,S13}.

We use the notion of temperature for the non-modulated output as well and in order to have unified notations we define the
function
\begin{equation}\label{eq:beta}
	\beta(M,\fr)  = \frac{g'(M)\ln 2}{\fr},
\end{equation}
by inverting Eq.~\eqref{eq:Moutmodbt}. In this way we obtain
\begin{equation}\label{eq:betabout}
 	\betaoutmod=\beta(\Moutmod,\froutmod), \qquad \betaout=\beta(\Mout,\frout).
\end{equation}
Then, using the definition $\betaout \equiv \beta(\Mout,\frout)$ we simplify Eq.~\eqref{eq:dLdfreqin} and thus, obtain an equation relating the non-modulated input frequency with the modulated and non-modulated
output frequencies and ``temperatures''.
\begin{equation}\label{eq:freqoutmodbt}
	 \betaoutmod( 1-\froutmod^2) = \betaout\left(\frin^2-\frout^2\right),
\end{equation}
This equation not only fully covers the interval $\Ninmod < \Nthr$, where it provides solution to the optimization problem, but is valid for arbitrary $\Ninmod$.

Indeed, we confirm that for $\Ninmod \ge \Nthr$ Eq.~\eqref{eq:freqoutmodbt} is satisfied by the solution which we obtained in Eqs.~\eqref{eq:freqoutmodat} and \eqref{eq:resonanceOut}, i.e. $\froutmod = 1$ and the ``resonance'' between the input and non-modulated output state, i.e. $\frin = \frout$. Recall that the latter is equivalent to the ``resonance'' between the input and the environment, i.e. $\frin = \frenv$.
 
The solution to Eq.~\eqref{eq:freqoutmodbt} gives the functional dependence of the optimal input frequency on the channels parameters and the amount of photons at the input: $\frin=\frin$($\tau$, $y$,
$\frenv$,$\Ninmod$). Then in its turn $\frin$ determines 
$\CapGchi$ in \eqref{eq:capgauss}. Following Eq.~\eqref{eq:CMoutdiag} and taking into account Eq.~\eqref{eq:modqzero} we obtain
\begin{eqnarray}
	\Moutmod + \frac{1}{2}
    	& = &\left(|\tau|\frac{\frin^{-1}}{2}+y\frenv^{-1}\right)^{1/2}\nonumber \\ \label{eq:Moutfrin}
    	& \times& \left(|\tau|\left(2\Ninmod+1-\frac{\frin^{-1}}{2}\right)+y\frenv\right)^{1/2}, \\
  	\Mout  +\frac{1}{2}
    	& = & \left(\left(|\tau|\frac{\frin^{-1}}{2}+y\frenv^{-1}\right)\left(|\tau|\frac{\frin}{2}+y\frenv\right)\right)^{1/2}, \nonumber
\end{eqnarray}
where in the second factor of $\Moutmod$ we also took into account the energy constraint in order to express both $\Mout$ and $\Moutmod$ as functions of only one variable $\frin$. Finally, the Gaussian capacity of
a Gaussian channel is found as
\begin{equation}\label{eq:CapChiOmeIn}
	\CapGchi\left(\Phi_{(\tau,y,\frenv)},\Ninmod\right) = \chi_G(\tau,y,\frenv,\Ninmod,\frin(\tau,y,\frenv,\Ninmod)),
\end{equation}
where $\chi_G(\tau,y,\frenv,\Ninmod,\frin)$ is expressed according to \eqref{eq:capgauss} with $\Mout$ and $\Moutmod$ given by \eqref{eq:Moutfrin} and $\frin(\tau,y,\frenv,\Ninmod)$ is a solution of
Eq.~\eqref{eq:freqoutmodbt} for given channel parameters $(\tau,y,\frenv,\Ninmod)$. 

Observing that Eq.~\eqref{eq:freqoutmodbt} is transcendental due to the logarithm in the definition of $\beta$ we conclude that one cannot obtain its solution in a closed analytic form below for $\Ninmod < \Nthr$. In order to solve the equation numerically we need to use the expressions of $\betaout$, $\betaoutmod$, $\frout$ and $\froutmod$ as functions of $\frin$, the channel parameters $(\tau,y,\frenv)$, and the input energy
constraint $\Ninmod$:
\begin{equation}\label{eq:frbt}
	\begin{split}
 		\frout & = \sqrt{\frac{|\tau| \frac{\frin}{2} + y \frenv}{|\tau|\frac{\frin^{-1}}{2}+y\frenv^{-1}}},\\
 		\froutmod & = \sqrt{\frac{|\tau|\left(2\Ninmod+1- \frac{\frin^{-1}}{2}\right) +  y \frenv}{|\tau|\frac{\frin^{-1}}{2}+y\frenv^{-1}}}.
	\end{split}
\end{equation}
We use this solution in order to discuss the dependencies of the Gaussian capacity on the channel parameters in detail in the next sections.

Before going to this discussion we argue now why the constraint Eq.~\eqref{eq:modqzero} is optimal for $\frenv < 1$. This statement was already proven in our previous work on the classical additive noise channel
($\tau=1$) with phase sensitive noise for the whole domain $\Ninmod < \Nthr$~\cite{SKC11}. Since the transcendental equation \eqref{eq:freqoutmodbt}, up to a change of variables, is identical to the corresponding transcendental equation in~\cite{SKC11} the optimality of Eq.~\eqref{eq:modqzero} is proven by Lemma 1 of~\cite{SKC11}.

\section{Role of channel parameters}\label{sec:chpara}
Now we are ready to analyze the capacity of the fiducial channel as function of the channel parameters.
\subsection{Dependence on transmissivity \& gain $\tau$}
We first discuss the dependency of the Gaussian capacity on the transmissivity/gain $\tau$, when the other channel parameters parameters $\Menv$, $\frenv$, and $\Ninmod$ are kept constant. We emphasize here that we keep the amount of thermal photons in the environment $\Menv$ constant and not $y$, so that the latter becomes a function of $\tau$ and $\Menv$ according to \eqref{eq:ybytauMenv}.

Let us consider first the limits $\tau \to \pm \infty$. Both limiting cases concern
only the amplification channel and phase-conjugating channel, hence
\begin{equation}\label{eq:limytau}
  \lim_{|\tau| \to \infty}\frac{y}{|\tau|} = \Menv+\frac{1}{2}.
\end{equation} 

Since $\chi_G$ [defined in Eq.~\eqref{eq:capgauss}] depends on $\Moutmod$ and $\Mout$ we first study their behavior. 
In the limit of large $|\tau|$, both arguments of $\chiG$, namely, $\Mout$ and $\Moutmod$ diverge, which follows straightforwardly from Eqs.~\eqref{eq:chPS}-\eqref{eq:CMoutdiag}.

Since $\lim_{x \to \infty}[g(x)-\log_2(x)]=1/\ln 2$ we can replace the function $g(x)$ in $\chi_G$ by $\log_2(x)$ and thus,
\begin{equation}\label{eq:chilim}
		\chiG_{(|\tau|\to \infty)} \equiv  \lim_{|\tau| \to \infty}\chiG = \lim_{|\tau| \to \infty}\log_2 \left(\frac{\Moutmod}{\Mout}\right).
\end{equation}
The input energy threshold stated in Eq.~\eqref{eq:Nthr} in this limit behaves as follows:
\begin{equation}\label{eq:Nthrlim}
	\lim_{|\tau| \to \infty}\Nthr  = \frac{1}{2\frenv}\left[ 1 + \left(\Menv+\frac{1}{2}\right)\left(1 - \frenv^2\right) \right] - \frac{1}{2}.
\end{equation}

For the input energies ``above the threshold'', i.e.\\$\Ninmod \ge \lim_{|\tau| \to \infty}\Nthr$, the limiting function $\chiG_{(|\tau|\to \infty)}$ is maximized by the quantum
water-filling solution and optimal squeezing of the input states stated in Eqs.~\eqref{eq:freqoutmodat} and \eqref{eq:resonance}. We express $\Mout$ from the definition
\eqref{eq:Mth} with the help of \eqref{eq:CMoutdiag}, and $\Moutmod$ from Eqs.~\eqref{eq:MoutmodNoutmod} and \eqref{eq:Moutmodat}. Using these expressions and Eq.~\eqref{eq:limytau} we obtain the Gaussian capacity, and as argued before Eq.~\eqref{eq:CapGequalCap}, the classical capacity $C$ in this limit:
\begin{equation}\label{eq:limCapGtauAT} 
	\lim_{|\tau|\to \infty}\CapGchi = \lim_{|\tau|\to
\infty}C = \log_2 \frac{\Ninmod+\Nenv+1}{\Menv+1} , \quad \Ninmod \ge \Nthr, 
\end{equation} 
where $\Nenv+1/2=(\Menv+1/2)(\frenv^{-1}+\frenv)/2$.

For input energies ``below the threshold'', i.e.\\$\Ninmod < \lim_{|\tau| \to \infty}\Nthr$, Eq.~\eqref{eq:modqzero} is satisfied and the input energy constraint implies
\begin{equation}
   \left(\Minmod+\frac{1}{2}\right)\frinmod = 2\Ninmod +1 - \frac{1}{2}\frin^{-1}.
\end{equation}
Hence, the limit \eqref{eq:chilim} becomes
\begin{equation}\label{eq:chiGlimbt}
	\chiG_{(|\tau|\to \infty)} = \frac{1}{2}\log_2 \left(\frac{2\Ninmod+1-\frac{1}{2}\frin^{-1} + \epp}{\frac{1}{2}\frin + \epp}\right), \quad \Ninmod < \Nthr,
\end{equation}
where we introduced the variable $\epp=(\Menv+1/2)\frenv$, i.e. the variance of the $p$-quadrature of the added environment state. Since $\log_2(x)$ is a monotonous function it is sufficient to maximize its argument thus maximizing the logarithm in \eqref{eq:chiGlimbt}. The maximum is given by the solution of the following equation with respect to $\frin$
\begin{equation}\label{eq:iqqlimtau}
	\frac{2\Ninmod+1-\frac{1}{2}\frin^{-1} + \epp}{\frac{1}{2}\frin + \epp} = \frin^{-2},
\end{equation}
which yields a unique positive root
\begin{equation}\label{eq:frinlimtau}
	\frintauinf = \left(\sqrt{1+\frac{2\Ninmod+1}{\epp}+\frac{1}{4\epp^2}}-\frac{1}{2\epp}\right)^{-1}.
\end{equation}
\begin{figure}
	\centering
		\includegraphics[trim=26 0 0 0 mm, clip=true]{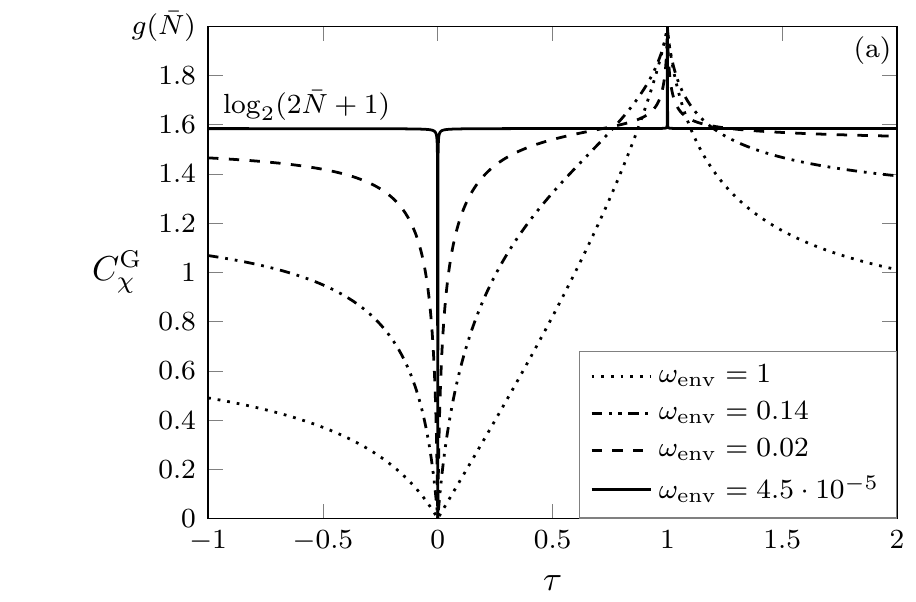}
		\includegraphics[trim=23 0 0 0 mm, clip=true]{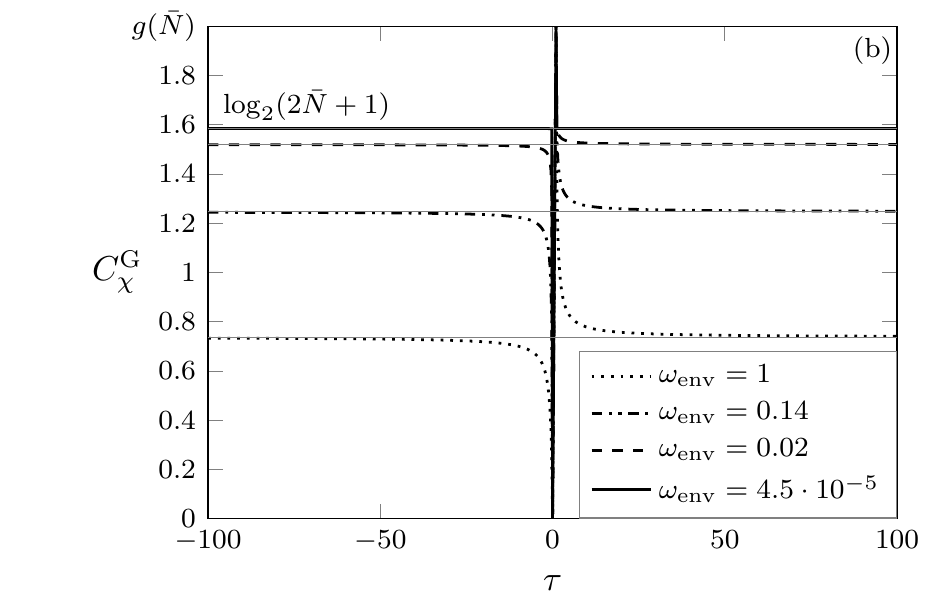}
	\caption{Gaussian capacity $\CapGchi$ vs. $\tau$. The different lines correspond to different values of the squeezing parameter $\frenv$ as indicated in the legend of the plots. Note that the point $\tau=1$ corresponds at the same time to the perfect transmission channel, the classical additive noise channel, or the limiting cases of the lossy channel and amplification channels depending on the choice of $y$, since $y$ is a function of $\tau$; here $\tau=1$ corresponds to the perfect transmission channel. The other parameters are $\Ninmod=1$ and $\Menv=0.1$. One clearly observes a common limiting behavior $\lim_{\frenv \to 0}\CapGchi = \log_2(2\Ninmod+1)$ of all channels (except for the perfect transmission channel and the zero transmission channel).}
	\label{fig:capvstau}
\end{figure}
Inserting \eqref{eq:iqqlimtau} into Eq.~\eqref{eq:chiGlimbt} we easily obtain
\begin{equation}\label{eq:limCapGtauBT}
	\lim_{|\tau|\to \infty}\CapGchi = -\log_2\left(\frintauinf\right), \quad \Ninmod < \Nthr,
\end{equation}
where $\frin$ is stated in Eq.~\eqref{eq:frinlimtau}. Note that the solution stated in \eqref{eq:limCapGtauBT} is identical to the solution of the \emph{homodyne rate} calculated in~\cite{PLM09}. (The homodyne rate is the maximal transmission rate when restricted to Gaussian encodings and homodyne detection.) 
Replacing $\epp$ by its expression in terms of $\Menv$ and $\frenv$ it is straightforward to show that
\begin{equation}
	\lim_{\frenv \to 0}\lim_{\tau \to \pm\infty}\CapGchi(\chPS,\Ninmod) = \log_2(2\Ninmod+1).
\end{equation}
We will come back to the discussion of this limit in the next subsection. 

Now we can straightforwardly prove the monotonicity of the capacities for the lossy channel ($\tau \in [0,1]$) and amplification channel ($\tau \ge 1$) using Lemma~\ref{lemPipeline} (see Appendix~\ref{ap:concatenated}). The classical capacity (as well as the Gaussian capacity) fulfills the pipelining property~\cite{NC00}
\begin{equation}\label{eq:PipeliningEquation}
	C(\chG_1 \circ \chG_2) \le \min\{C(\chG_1),C(\chG_2)\}.
\end{equation}
Then, using Lemma~\ref{lemPipeline} for $\tau_1,\tau_2 \ge 1$ (or $\tau_1,\tau_2 \in [0,1]$, or $\tau_1 < 0, \tau_2 \in [0,1]$) we obtain
\begin{eqnarray}
		\lefteqn{C\left(\chPS_{(\tau_2,y_2,s)} \circ \chPS_{(\tau_1,y_1,s)},\Ninmod\right)
		 =  C\left(\chPS_{(\tau_{\rm eff},y_{\mathrm{eff}},s)},\Ninmod\right)}\nonumber\\
		&  \le & \min \left\{ C\left(\chPS_{(\tau_1,y_1,s)},\Ninmod\right),  C\left(\chPS_{(\tau_2,y_2,s)},\Ninmod\right)\right\}.
\end{eqnarray}
where $y_i=|1-\tau_i|(\Menv+1/2)$, ($i=1,2$), the effective gain parameter is $\tau_{\rm eff} \equiv \tau_1\tau_2$  and the effective noise is characterized by $y_{\mathrm{eff}}= |1-\tau_{\rm eff}|(\Menv+1/2)$.

Amplification channels satisfy $\tau_{\rm eff} \ge \tau_1,\tau_2 \ge 1, \; \forall \tau_1,\tau_2$. An amplification channel with a higher gain
$\tau' > \tau \ge 1$ can always be decomposed into an amplification channel with gain $\tau_1=\tau$ followed by another amplification channel with
gain $\tau_2=\tau' / \tau$. As we see above, due to the pipelining property the classical capacity (and the Gaussian capacity) of the channel with
gain $\tau'$ cannot have increased. Similar relations for the lossy channel, for which the effective transmissivity satisfies $\tau_{\rm eff} \le
\tau_1,\tau_2$, $\forall \tau_1,\tau_2\in [0,1]$ lead to a conclusion that higher losses (lower transmissivity $\tau$) cannot increase the classical
capacity (and Gaussian capacity). In the case of the phase-conjugating channel ($\tau < 0$) Lemma~\ref{lemPipeline} states that a fiducial channel
with $\tau < \tau' < 0$ can always be decomposed into a phase-conjugating channel with $\tau_1 = \tau$ followed by a lossy channel with $\tau_2 =
\tau'/\tau$, where $\tau_2 \in [0,1]$. This implies that the capacity cannot have increased by increasing $\tau$.

The highest (Gaussian) capacity is given by the case $\tau=1$, i.e. the perfect transmission channel, for which $\CapG=C=g(\Ninmod)$. Since for $\tau\ge1$ increasing $\tau$ cannot
increase the Gaussian capacity $\CapGchi$, we proved that $\CapGchi$ monotonically tends from above to its limiting value stated in Eqs.~\eqref{eq:limCapGtauAT} and \eqref{eq:limCapGtauBT}. Thus,
\begin{equation}
	g(\bar{N}) \ge \CapGchi\left(\chPS_{(\tau\ge 1,y,s)},\Ninmod\right) \ge \lim_{\tau' \to \infty}\CapGchi\left(\chPS_{(\tau',y',s)},\Ninmod\right),
\end{equation}
where $y$ and $y'$ are functions of $\tau$, $\tau'$ and $\Menv$ according to \eqref{eq:ybytauMenv} and the latter is kept constant.

Equivalently, the lowest capacity is given by the case $\tau=0$, i.e. the zero-transmission channel. Since for $\tau \le 0$ decreasing $\tau$ cannot decrease $\CapGchi$, we proved that $\CapGchi$ monotonically tends from below to its limiting values stated in Eqs.~\eqref{eq:limCapGtauAT} and \eqref{eq:limCapGtauBT}. 

We illustrate the limiting behavior of $\CapGchi$ vs. $\tau$ for different values of squeezing $\frenv$ and the same $\Menv$ in Fig.~\ref{fig:capvstau}. One confirms the monotonic decrease of $\CapGchi$ for $\tau \ge 1$ and $\tau \le 0$. In addition, as argued above, on observes that indeed the limit $\lim_{\tau \to +\infty}\CapGchi$ is reached from above, whereas the limit $\lim_{\tau \to -\infty}\CapGchi$ is reached from below (with decreasing $\tau < 0$). 

\begin{figure}
	\centering
		\includegraphics[trim=12 0 0 2 mm, clip=true]{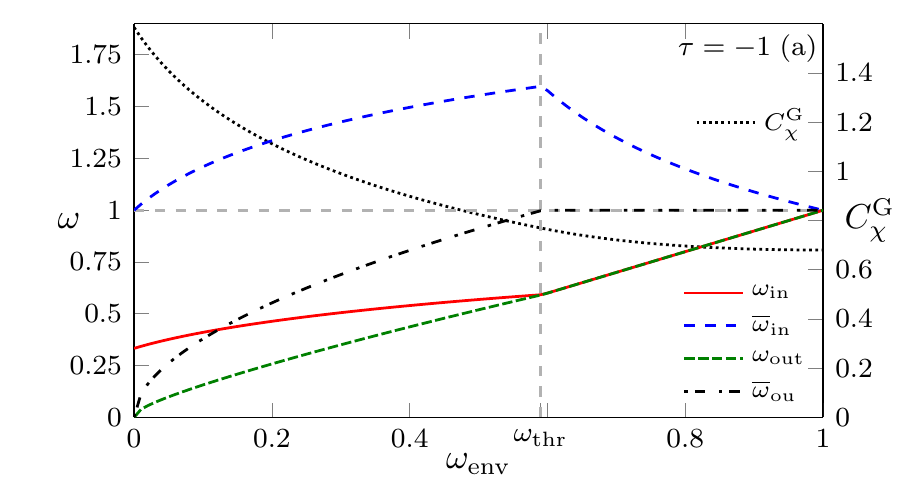}
		\includegraphics[trim=16 0 0 2 mm, clip=true]{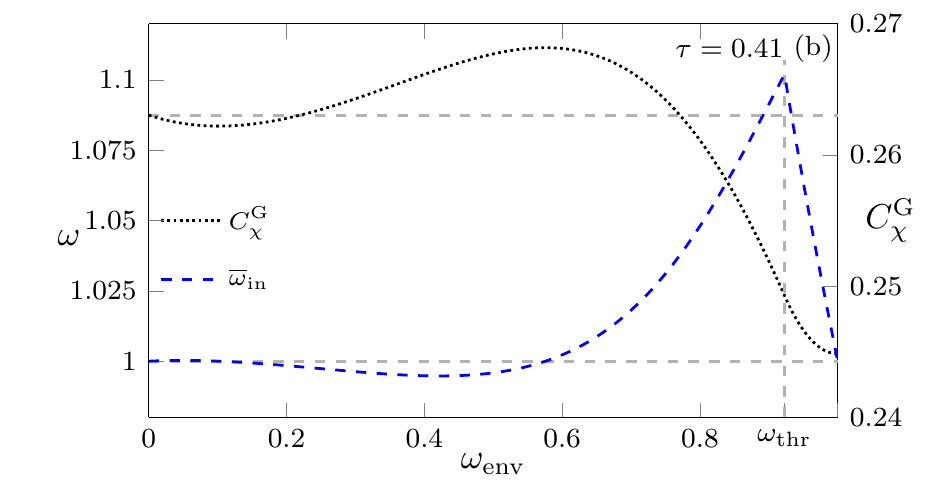}		
		\includegraphics[trim=12 0 0 2 mm, clip=true]{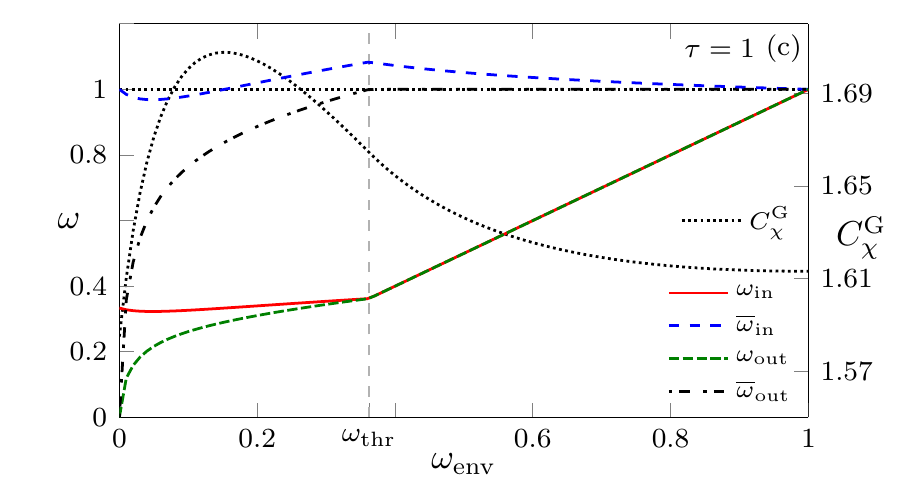}
		\includegraphics[trim=12 0 0 2 mm, clip=true]{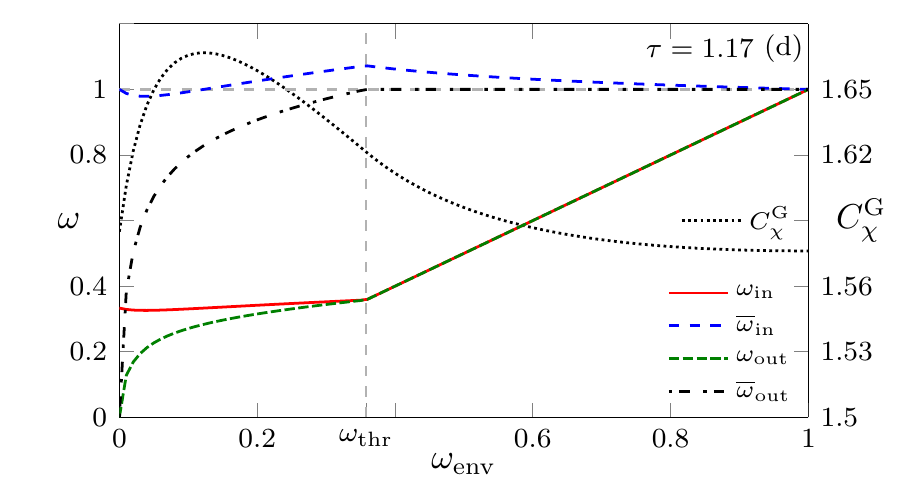}
	\caption{(Color online) Gaussian capacity $\CapGchi$ (right axis) and frequencies (left axis) of the input and output states, modulated and non-modulated, vs. $\frenv$ for different channels given by difference values of $\tau$, namely,
$\frin$ - upper solid (red) curve,
$\frinmod$ -  upper dashed (blue) curve,
 $\frout$ - lower densely-dashed (green) curve,
$\froutmod$ - dashed-dotted (black) curve,
 $\CapGchi$ - dotted (black) curve, right axis,
$\frthr$ - dashed vertical line.
The figures from top to bottom correspond to: 
phase-conjugating channel, $\tau=-1$, $\Ninmod=1$, $\Menv=0.1$, $\frthr=0.59$, 
lossy channel, $\tau=0.41$, $\Ninmod=0.1$, $\Menv=10^{-3}$, $\frthr=0.92$, 
additive noise channel $\tau=1$, $\Ninmod=1$, $\Menv=0.1$, $\frthr=0.362$, 
and amplifications channel $\tau=1.167$, $\Ninmod=1$, $\Menv=0.1$, $\frthr=0.358$. 
For the lossy channel ($\tau=0.41$) only $\frinmod$ (dashed blue) and $\CapGchi$ (dotted black) are plotted.
}
	\label{fig:freqvsfreq}
\end{figure}


\subsection{Dependence on noise squeezing $\frenv$}
Now we analyze the behavior of the Gaussian capacity as a function of the noise squeezing, while keeping the other channel parameters $\tau$, $y$, and $\Ninmod$ constant. Note that in this case both $y$ and
$\Menv$ are constant because $\tau$ does not change. We first investigate $\CapGchi$ in the limit of infinite squeezing $\frenv \to 0$. Since (as discussed above) the only arguments of $\chi_G$ are $\Moutmod$ and $\Mout$ [see Eq.~\eqref{eq:capgauss}] we first study their behavior in this limit. From Eqs.~\eqref{eq:Mth}, \eqref{eq:CMoutCMoutmod}, and \eqref{eq:chPS} it follows that
\begin{equation}\label{eq:Moutinf}
	\begin{split}
		\lim_{\frenv \to 0}\Mout & = \lim_{\frenv \to 0}\sqrt{\frac{|\tau|y\, \frin}{\frenv} \frac{1}{2}} \to \infty,\\
		\lim_{\frenv \to 0}\Moutmod & = \lim_{\frenv \to 0}\sqrt{\frac{|\tau|y\, \frinmod}{\frenv} \left(\Ninmod+\displaystyle\frac{1}{2}\right)} \to \infty.
	\end{split}	
\end{equation}

This implies that the difference of $g$-functions can again be replaced by a difference of logarithms, i.e.
\begin{equation}\label{eq:limchiG}
	\begin{split}
		\lim_{\frenv \to 0}\chi_G & = \lim_{\frenv \to 0}\log_2\left(\frac{\Moutmod}{\Mout}\right)\\
		& =\lim_{\frenv \to 0}\frac{1}{2} \log_2\left((1+2\Ninmod) \frac{ \frinmod}{\frin}\right)\\
									& = \frac{1}{2}\log_2(1+2\Ninmod)\\
									& +\frac{1}{4}\log_2\left(\frac{(2+4\Ninmod)\frin-1}{\frin^2}\right),
	\end{split}	
\end{equation}
where at the last step we used Eq.~\eqref{eq:frininmod}. Now, as in the case $\tau \to \infty$, it is sufficient to maximize the argument of the logarithm in order to obtain the Gaussian capacity in this limit. The argument is maximized by input states with squeezing
\begin{equation}\label{eq:limfreqin}
	\frininf = (1+2\Ninmod)^{-1},
\end{equation}
which leads to
\begin{equation}\label{eq:limcapgauss}
	\lim_{\frenv \to 0}\CapGchi = \log_2(1+2\Ninmod)=-\log_2\left(\frininf\right).
\end{equation}
Interestingly, Eq.~\eqref{eq:limfreqin} together with Eq.~\eqref{eq:modqzero} imply that the corresponding optimal modulated input state is a thermal state, i.e.
\begin{equation}\label{eq:limfreqinmod}
	\frinmod = 1, \quad \Minmod=\Ninmod.
\end{equation}
Note that the limit stated in Eq.~\eqref{eq:limcapgauss} was already derived in~\cite{PLM09} for the phase-sensitive lossy channel and here we confirm that it is valid for an arbitrary single-mode Gaussian channel as well.

This result can be interpreted as follows: In the limit of infinite squeezing of the
noise, $\frenv \to 0$, one of the noise quadratures vanishes (here: the $p$-quadrature)
and the other becomes infinite. Then, only one degree of freedom is available for
information transmission. At the same time it no longer suffers from the noise induced
by the channel but only from the quantum noise induced by the input state itself.
We observe that the one-shot Gaussian capacity then coincides with the doubled Shannon
capacity $C_{\rm Sh} = \frac{1}{2}\log_2(1+SNR)$ of a classical channel with
signal-to-noise ratio $SNR=\Ninmod/(1/2)$. The limiting value stated in
Eq.~\eqref{eq:limcapgauss} is confirmed in Fig.~\ref{fig:capvstau} where the Gaussian
capacity of the fiducial channels is plotted as a function of $\tau$ for different
$\frenv$.

Let us analyze next how the Gaussian capacity depends on the environment frequency
in the whole interval $0 < \frenv \le 1$. Examples of these curves for different types of
channels are presented in Fig.~\ref{fig:freqvsfreq}. One can see that the Gaussian
capacity, the optimal modulation and therefore, the output frequencies can be non-monotonous functions of $\frenv$
with maxima and/or minima. Generalizing several results that were previously obtained
for the lossy channel in~\cite{PLM09} we will study the extremal points of the capacity
curve at a finite noise squeezing $0 < \frenv \le 1$.

We begin with the input energy domain $\Ninmod \ge \Nthr$. Given $\Ninmod$ and $y$ we first obtain the \emph{threshold noise frequency} $\frthr$ from $\Nthr$ [stated in Eq.~\eqref{eq:Nthr}]. Although the equation has two roots, there is always only one positive
\begin{equation}
	\frthr = \frac{\displaystyle1+\frac{y}{|\tau|}}
	{\sqrt{\displaystyle\left(\frac{y}{|\tau|}\right)^2+\frac{y}{|\tau|} + \left(\Ninmod+\frac{1}{2}\right)^2}
	+\Ninmod+\frac{1}{2}},
\end{equation}
which is a decreasing function of $\Ninmod$ taking value $\frthr =1$ for $\Ninmod = 0$ so that for all $0 < \Ninmod < \infty$ we have $0 < \frthr < 1$. For $\frthr \le \frenv \le 1$ the solution for the capacity stated in Eq.~\eqref{eq:capat} holds, from which we conclude that
\begin{equation}
	\frac{\partial \CapGchi}{\partial \frenv} \le 0,  \quad   \frthr \le \frenv \le 1, 
\end{equation}
where equality is realized only if $\frenv = 1$, and furthermore,
\begin{equation}
	\frac{\partial^2 \CapGchi}{\partial \frenv^2} \ge 0, \quad \frthr \le \frenv \le 1,
\end{equation}
where also equality is only realized $\frenv = 1$. Hence, for $\frenv \ge \frthr$, the Gaussian capacity is a monotonously
decreasing and convex function with respect to $\frenv$ attaining its minimum at the boundary $\frenv = 1$ of this interval. This
means that although squeezing adds more energy to the noise the thermal noise is worst for the capacity in this interval of
$\frenv$. This observation was made first for the lossy channel in~\cite{PLM09} and for the additive noise
channel in~\cite{SKC10}. Finally, we conclude that the maximum of the Gaussian capacity as a function of $\frenv$ lays in the
interval $\frenv\in [0,\frthr]$, where, as we have shown above, the capacity is found by the solution ``below the threshold''. In this
case there is no closed formula for the Gaussian capacity. In order to find a maximum of the Gaussian capacity in this interval
of $\frenv$ one can study an implicit solution given by equations \eqref{eq:Moutmodbt} and \eqref{eq:freqoutmodbt}. However, we will find this maximum using another method. 

We will consider from the beginning a modified optimization problem with the same Lagrangian \eqref{eq:lagrangian} but taking $\frenv$ as an additional degree of freedom, so that the gradient reads
\begin{equation}
	\nabla = \trans{\left(\frac{\partial}{\partial \frin}, \frac{\partial}{\partial \froutmod}, \frac{\partial}{\partial \frenv}\right)}.
\end{equation}
Note that now the output energy $\Noutmod$ is no longer constant. It is a function of $\frenv$ [see Eq.~\eqref{eq:MoutmodNoutmod}] because
$\Tr[\effPS]$ is changing with $\frenv$. We have need to take into account this functional dependence when taking the derivative of $\CapGchi$ with
respect to $\frenv$. As a result, in addition to the two equations stated in Eq.~\eqref{eq:lagrsystem}, we have the third equation
\begin{equation}\label{eq:lagrange3}
       \frac{\partial {\mathcal L}}{\partial \frenv} = 0,
\end{equation} 
which can be simplified with the help of Eqs.~\eqref{eq:Moutmodbtv1}, \eqref{eq:Moutmodbt}--\eqref{eq:betabout}: 
\begin{equation}\label{eq:freqoutmodwenv}
	\betaoutmod\left(\frenv^2 -\froutmod^2\right)=  \betaout\left(\frenv^2-\frout^2\right).
\end{equation}
The trivial solution of the latter is given when both sides of the equation vanish. This is the case of thermal noise, i.e. $\frenv=1$, for which $\froutmod = \frout = 1$ is optimal. Note that $\frenv=1$ corresponds to a (local) minimum of the Gaussian capacity, and not a maximum [see Fig.~\ref{fig:capsvswenv}].

A joint solution of \eqref{eq:Moutmodbt}, \eqref{eq:freqoutmodbt}, and \eqref{eq:freqoutmodwenv} gives us both the optimal noise and optimal input frequency, $\frenv$ and $\frin$, respectively, corresponding to the extrema of the Lagrangian. In order to find the solution we express $\betaout/\betaoutmod$ from Eq.~\eqref{eq:freqoutmodbt} and insert the result into Eq.~\eqref{eq:freqoutmodwenv} thus obtaining
\begin{equation}\label{eq:freqenvmax}
	\frac{\froutmod^2-1}{\froutmod^2-\frenv^2} = \frac{\frout^2-\frin^2}{\frout^2-\frenv^2}.
\end{equation}
Injecting into this equation the definition for $\frout$, stated in Eq.~\eqref{eq:frbt}, we simplify its right hand side to
\begin{equation}\label{eq:freqoutmodenvrhs}
	\frac{\frout^2-\frin^2}{\frout^2-\frenv^2} = -\frac{2y\frin}{|\tau| \frenv}.
\end{equation}
Then using the expression for $\froutmod$, given by Eq.~\eqref{eq:frbt}, on the left hand side of \eqref{eq:freqenvmax} and equalizing it with the right hand side of Eq.~\eqref{eq:freqoutmodenvrhs} we obtain
\begin{equation}
	\frin=\frininf.
\end{equation}
Interestingly, this solution is equal to the optimal input frequency obtained for the limit of infinite squeezing, stated in Eq. \eqref{eq:limfreqin}. Note that the latter again implies that the optimal modulated input state is a thermal stated, i.e. $\frinmod=1$ [see Eq.~\eqref{eq:limfreqinmod}]. 

Now, in order to obtain the optimal noise frequency, we insert solutions $\frin=\frininf$ and $\frinmod=1$ into  Eq.~\eqref{eq:freqoutmodwenv} which leads to
\begin{equation}\label{eq:frenvext}
	\left(\Moutmod+\frac{1}{2} \right)\left(\froutmod - \frac{1}{\froutmod} \right) = y \left( \frenv - \frac{1}{\frenv}\right),
\end{equation}
which states that the difference between the variances of the modulated output state is exactly matched by the difference between the noise variances.


In order to find numerically the solution of Eq.~\eqref{eq:freqoutmodwenv} we express $\frout$  in the form \eqref{eq:frbt}, $\Mout$ in the form \eqref{eq:Mth} taking into account \eqref{eq:CMoutdiag}, where we inject $\frin=\frininf$.
Also we use similar expressions for $\froutmod$ and $\Moutmod$:
\begin{eqnarray}
   \froutmod & = & \sqrt{\left(\displaystyle \frac{|\tau|}{2\frininf}+y\frenv\right)\bigg/\left(\displaystyle\frac{|\tau|}{2\frininf}+y\frenv^{-1}\right)},\label{eq:froutmotext}\\
	\Moutmod & = & \sqrt{\left(\displaystyle \frac{|\tau|}{2\frininf}+y\frenv\right)
				\left(\displaystyle\frac{|\tau|}{2\frininf}+y\frenv^{-1}\right)}  - \frac{1}{2},\label{eq:Moutmod}\nonumber
\end{eqnarray}
where we have taken into account  Eq.~\eqref{eq:limfreqinmod}.

If the solution of the transcendental Eq.~\eqref{eq:freqoutmodwenv} together with Eqs.~\eqref{eq:limfreqin} and \eqref{eq:limfreqinmod} exists, it provides the extremal points of $\frenv$ corresponding to the local maxima or minima of $\CapGchi$, which is equal to $\chi_G$ [see Eq.~\eqref{eq:capgauss}] evaluated at these points.

\begin{figure}
	\centering
		\includegraphics[trim=5 0 0 0 mm, clip=true]{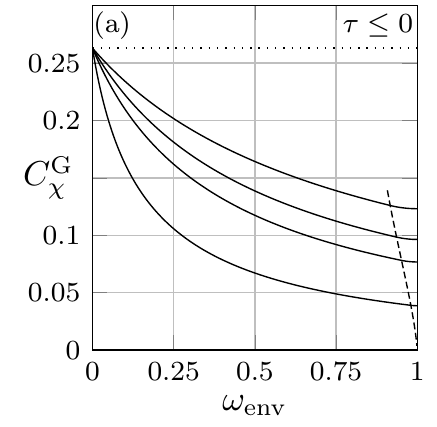}
		\includegraphics[trim=5 0 0 0 mm, clip=true]{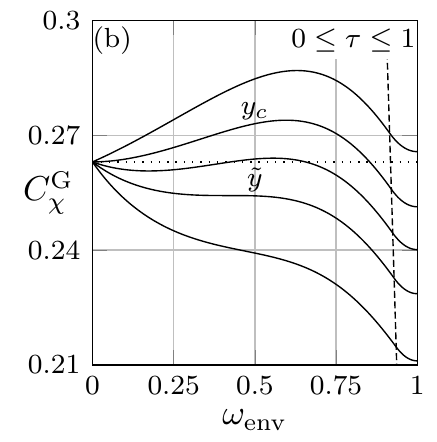}		
		\includegraphics[trim=5 0 0 0 mm, clip=true]{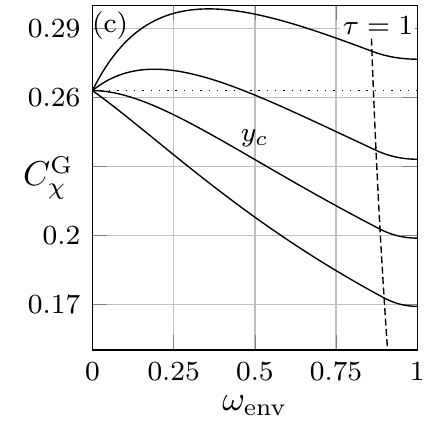}
		\includegraphics[trim=5 0 0 0 mm, clip=true]{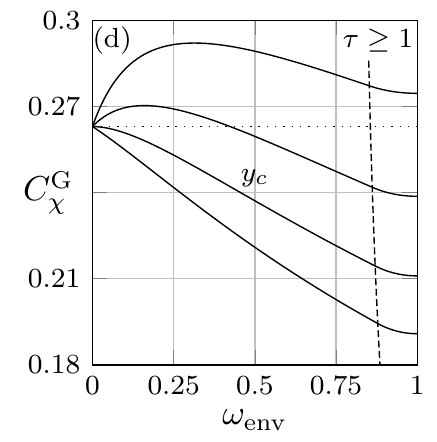}
	\caption{Capacities $\CapGchi$ vs. environment frequency $\frenv$ (from bottom to top) : 
	(a)  phase-conjugating ch.:  $\tau=\{-0.125, -0.5, -1, -4\}$, 
	(b):  lossy ch.: $\tau=\{0.34,\, 0.3759,\,0.41,\,0.455\}$, 
	(c):  additive noise ch.:  $y=\{0.125,\,0.2,\,1/\sqrt{12},\,0.4\}$
	(d): amplification ch.: $\tau = \{1.2,1.35,1+2/\sqrt{12},1.9\}$.
	For all plots we chose $\Ninmod=0.1$ and for plots (a), (b) and (d) $\Menv=10^{-3}$. For $\tau \in [0,1]$ the curve corresponding to $y=\tilde{y}$ has exactly a saddle point at finite squeezing, two extrema can be found for $\tilde{y} < y < y_c$ and one maximum is found for $y<y_c$. For the classical additive noise channel ($\tau=1$) and the amplifier ($\tau \ge 1$) the curves corresponding to $y \ge y_c$ are monotonically decreasing with $\frenv$ whereas for $y<y_c$ a maximum appears. The dashed straight line separates the solutions ``below'' (left) and ``above'' (right) the threshold $\Nthr$.}
	\label{fig:capsvswenv}
\end{figure}
\subsection{Classification}
In the following we classify the Gaussian capacity $\CapGchi$ with respect to the noise frequency $\frenv$. Our numerical analysis (see some typical graphs in Fig.~\ref{fig:capsvswenv}) revealed that all curves have no more than two inflection points and only the following scenarios in the open interval $\frenv \in (0,1)$ exist:
\begin{enumerate}[label=(\roman*)]
	\item\label{itm:noExt} there are no extremal points and the capacity is a monotonous function of $\frenv$,
	\item\label{itm:oneMax} there is only one extremal point corresponding to a maximum, 
	\item\label{itm:saddle} there is only one extremal point which is a saddle point,
	\item\label{itm:maxMin} there are two extrema: one minimum and one maximum. 
\end{enumerate}
All four different scenarios are observed only for the lossy channel (see also~\cite{PLM09}), whereas for the other channels either no extrema or only one maximum were found.
Scenarios \ref{itm:noExt} and \ref{itm:oneMax} were observed and studied for the classical additive noise channel in~\cite{SKC10}.

Although we do not have an analytic proof of these observations we take the statements above for granted in order to deduce a classification of the dependency of $\CapGchi$ on $\frenv$. We studied a wide range of parameters and did not find any curve, which violates our classification, and that is why we conjecture our hypothesis to be true.

First, we exploit the topological argument based on the observations
above and the behavior of the capacity in the limits $\frenv \to 0$ and $\frenv \to 1$. The last limit
was discussed in the previous subsection with the conclusion that for $\frenv\rightarrow1$ the capacity
is a monotonically decreasing and convex function of $\frenv$.

A simple topological argument allows us to conclude that if at $\frenv = 0$ the function $\CapGchi(\frenv)$ is
\begin{itemize}
\item[-]
\emph{increasing}: scenario \ref{itm:oneMax} can be realized (one maximum) with one inflection point if $\CapGchi$ is concave at $\frenv = 0$ and with two inflection points if it is convex. All other scenarios cannot be realized since they would require more than two inflection points, which was never observed and as we conjecture never happens.

\item[-]
\emph{decreasing and concave}: only scenario \ref{itm:noExt} can be realized, i.e. $\CapGchi$ is monotonously decreasing in the whole interval of $\frenv$ with one inflection point. Any other scenario would require more than two inflection points.

\item[-]
\emph{decreasing and convex}: all possible scenarios except \ref{itm:oneMax} can be realized.
\end{itemize}

Let us find now the domains of the channel parameters where the first and the second derivatives of $\CapGchi$ over $\frenv$ have definite signs at $\frenv \to 0$. We will study the coefficients of the expansion of $\CapGchi$ given by Eq.~\eqref{eq:CapChiOmeIn} in powers of $\frenv$ around the point $\frenv=0$.
\begin{eqnarray}\label{eq:expansionCapGchi}	
	\lefteqn{\CapGchi(\tau,y,\frenv,\Ninmod)  }  \\
		& \approx  &  -\log_2(\frininf)  +  a \, \frenv + b \, \frenv^2 + c \, \frenv^3
		 + {\mathcal{O}(\frenv^4)}, \nonumber
\end{eqnarray}
where we used our result on the optimal encoding $\frin=\frininf$ at $\frenv=0$ [see Eq.~\eqref{eq:limfreqin}]. Note that the neighborhood of $\frenv = 0$ lays in the domain $\frenv \in (0,\frthr)$ which corresponds to the solution for $\Ninmod < \Nthr$. Since no explicit expression for $\frin(\tau,y,\frenv,\Ninmod)$ exists in this domain we obtain the
coefficients on the right hand side of Eq.~\eqref{eq:expansionCapGchi} using an the expansion of $\frin$ in powers of $\frenv$:
\begin{equation}\label{eq:frinTaylor}
  \frin \approx \frininf + \alpha\frenv + \beta\frenv^2+\frac{\gamma}{6}\frenv^3 +\mathcal{O}(\frenv^4).
\end{equation}
The unknown coefficients $\alpha$, $\beta$, and $\gamma$ are found by expanding the transcendental
Eq.~\eqref{eq:freqoutmodbt} in powers of $\frenv$ with the help of \eqref{eq:frinTaylor}. In this way we
obtain a set of equations, each corresponding to a particular power of $\frenv$. The equations
corresponding to the first and second power provide us with the solutions:
\begin{eqnarray}\label{eq:alphabeta}
  \alpha & = & -\ln2K_1\frininf\left(y^2-y_c^2\right),\\
  \beta  & = &  -\ln2K_2\frininf\
  		\left[\frac{15}{2}\tilde{\tau}_c^2\left(y^2-y_c^2\right)^2
  		- y_c^2\left(\tau^2-\tilde{\tau}_c^2\right) \right],\nonumber\\
  K_j & =&  \frac{(-1)^j}{\ln 2 }\frac{1-\frininf^{2}}{\left(\frininf|\tau|y\right)^j}\nonumber
\end{eqnarray}
where the new notations
\begin{eqnarray}\label{eq:yc}
	y_c^2 & = & \frac{1}{12}, \\
	\label{eq:ttc}
	\tilde{\tau}_c^2 & = & \frac{2}{15}\left(1+\frininf^{2}\right),\nonumber
\end{eqnarray}
correspond, as we will show below, to the critical parameters determining the domains we are looking for.

Then the coefficients of the expansion \eqref{eq:expansionCapGchi} are found in the form:
\begin{eqnarray}
\label{eq:coefa}	
	a & = &
	 K_1\left(y^2-y_c^2\right),  \\
\label{eq:coefb}
	b & = & 
		b' + K_2\left(y^2-y_c^2\right)^2\left(1-\frac{15}{4}\tilde{\tau}_c^2\right), \\
\label{eq:coefbp}
	b' & = & K_2 \left[\frac{15}{2}\tilde{\tau}_c^2\left(y^2-y_c^2\right)^2
			-\frac{1}{2}y_c^2\left(\tau^2- \tilde{\tau}_c^2\right)\right], \\
\label{eq:coefc}
	c & = & c' + K_3 \left(2-\frac{15}{2}\tilde{\tau}_c^2\right)\left(y^2-y_c^2\right)\\
	& \times & \left[\left(1+\frac{15}{2}\tilde{\tau}_c^2\right)\left(y^2-y_c^2\right)^2
	-y_c^2\left(\tau^2- \tilde{\tau}_c^2\right)\right]\\
\label{eq:coefcp}
	c' & = & K_3 \left[\frac{4}{3}\left(y^2-y_c^2\right)^3
			\left(\left(1-\frac{15}{2}\tilde{\tau}_c^2\right)^2+\frac{15}{2}\tilde{\tau}_c^2\right)\right.  \\ 
	   & - & \frac{15}{2}\left(y^2-y_c^2\right)y_c^2\tilde{\tau}_c^2
	   	\left(\tau^2- 2\tilde{\tau}_c^2+\frac{4}{15}\left(1-\frac{2}{15}\tilde{\tau}_c^{-2}\right)\right)\nonumber\\
	  & - & \left.\frac{1}{48}\left(\left(\tau^2-\tilde{\tau}_c^2\right)^2
			-\frac{1}{21}\left(\left(\tilde{\tau}_c^2+\frac{4}{3}\right)^2-\frac{32}{15}\right)\right)\right].\nonumber
\end{eqnarray}

This expansion shows that in the limit $\frenv\rightarrow0$ the partial derivatives of $\CapGchi$ with respect to $\frenv$ up to the third order coincide with the full derivatives for 
 \begin{equation}\label{eq:ycrit}
	y=y_c \equiv \frac{1}{\sqrt{12}}.
\end{equation}
Moreover, if the latter is satisfied, the dominant linear term in the expansion
\eqref{eq:expansionCapGchi} is canceled because the coefficient $a$ vanishes. As we discussed above, this
would imply that $\frenv=0$ is an extremal point of $\CapGchi$ for $y=y_c$ and therefore, the value of $y$ alone decides how the capacity approaches the limiting value $\log_2(1+2\Ninmod)$, i.e., from below or above. For $y < y_c$ we find that $\rm{d} \CapGchi / \rm{d}\frenv > 0$ in the vicinity of $\frenv=0$. Then from the topological arguments discussed above we conclude that scenario \ref{itm:oneMax} is realized and $\CapGchi$ has one local maximum laying in the interval $0 < \frenv < \frthr$ . Thus, the line $y=y_c$ clearly separates the domain of channel parameters where only scenario \ref{itm:oneMax} is realized from the domain where only other scenarios are possible (see Fig.~\ref{fig:ytaudivision}).

Let us analyze the conditions on the other channel parameters which allow $y$ to attain the critical value. We notice first, that the condition $y=y_c$ can only be fulfilled when $\tau$ lays in a particular interval.
Indeed, for pure noise ($\Menv=0$) we obtain from \eqref{eq:ybytauMenv} that
\begin{equation}\label{eq:ypure}
	y = \frac{|1-\tau|}{2},
\end{equation}
which provides a lower bound on the physical region of $y$ (see Fig.~\ref{fig:ytaudivision}). In particular, for $\tau=1$ the lower bound is $y=0$ which is a valid lower bound as for the classical additive noise channel as well as for limiting cases of the lossy and amplification channels. Then, for an arbitrary Gaussian noise with parameter $y$ given by Eq.~\eqref{eq:ybytauMenv}, equation $y=y_c$ has two solutions:
\begin{equation}\label{eq:tauc}
	\tau_c = 1 \pm \frac{1}{\sqrt{3}(2\Menv+1)}.
\end{equation}
 Both solutions belong to the interval bounded by the values of $\tau_c$ for pure noise ($\Menv=0$) :
\begin{equation}
	\tau_L \le \tau_c \le \tau_R,\qquad \tau_{R,L} \equiv 1\pm\frac{1}{\sqrt{3}}.
\end{equation}
If $\tau$ lays outside this interval then the parameter $y\le y_c$ can never be attained by any $\Menv$ and therefore, scenario \ref{itm:oneMax} (one maximum) can never be realized. In Fig.~\ref{fig:ytaudivision} this corresponds to the physical region above the dark gray triangle.

We further observe that in the region defined by $y > y_c$ the first derivative of $\CapGchi$ at both
edges of the interval $\frenv \in [0,1)$ is negative. It implies
that although scenario \ref{itm:oneMax} is excluded, all three other scenarios are possible. Namely, the
function $\CapGchi(\frenv)$ may be either monotonously decreasing or have one maximum and one minimum
[both located in the interval $\frenv \in (0,\frthr)$] or have a saddle point, see
Fig.~\ref{fig:capsvswenv}.

In order to discriminate the number of extrema further we need to study the second derivative determining the factor $b$ of the next term of Taylor expansion \eqref{eq:expansionCapGchi}, which is quadratic in
$\frenv$. Note that contrary to $a$, factor $b$ [see Eq.~\eqref{eq:coefb}] is not always zero for the critical value $y=y_c$. A joint solution of $a=b=0$ is given by the ``super critical'' point $\tau=\tau_c=\tilde{\tau}_c$, which
satisfies
\begin{equation}\label{eq:tautildecrit}
	\tilde{\tau}_c=\sqrt{\frac{2}{15}}\sqrt{1+\frininf^{2}},
\end{equation}
and depends on the amount of photons $\Ninmod$ in the input due to \eqref{eq:limfreqin}. The range $\frininf\in(0,1]$ corresponding to $\Ninmod \in [0,\infty)$ implies the following bounds on $\tilde{\tau}_c$:
\begin{equation}
	\tilde{\tau}_L < \tilde{\tau}_c \le \tilde{\tau}_R,
\end{equation}
where 
\begin{equation}\label{eq:tildetauLR}
	\tilde{\tau}_L \equiv \sqrt{\frac{2}{15}}, \quad \tilde{\tau}_R \equiv \frac{2}{\sqrt{15}}.
\end{equation}
Note that
\begin{equation}
   \tilde{\tau}_L<\tau_L <  \tilde{\tau}_R<\tau_R, 
\end{equation}
and therefore the bound becomes $\tau_L \le \tilde{\tau_c} \le \tilde{\tau}_R$ (see Fig.~\ref{fig:ytaudivision}). Then, the left bound $\tau_L$ together with Eq.~\eqref{eq:tautildecrit} leads to the following bound
on $\Ninmod$:
\begin{equation}\label{eq:Nimax}
   \Ninmod  \le \Ninmodcrit = \frac{1}{2} \left[ \sqrt{\frac{3}{2} + \frac{5}{\sqrt{12}}} - 1\right] \approx 0.3578.
\end{equation}
Now remember that we are looking to the point where $a=b=0$ and therefore $\tau=\tau_c=\tilde{\tau_c} \le \tilde{\tau}_R$. This condition together with Eqs.~\eqref{eq:tauc} and \eqref{eq:tildetauLR} implies the
following bound on $\Menv$:
\begin{equation}\label{eq:Menvmax}
	\Menv \le \Menvcrit = \frac{1}{2} \left[ \left(\sqrt{3} -\frac{2}{\sqrt{5}} \right)^{-1} - 1\right] \approx 0.0969.
\end{equation}
In Fig.~\ref{fig:ytaudivision} this corresponds to the interval of the line $y=y_c$ enclosed between $\tau_L$ and $\tilde{\tau}_R$.

Since we know that for $y<y_c$ scenario \ref{itm:oneMax} is realized and therefore, there is no saddle point. Hence, if $y>y_c$ and $\tau=\tau_c$ then a saddle point exist. When $y$ decreases and approaches $y_c$ the saddle point approaches $\frenv=0$ and disappears for $y<y_c$.

We further observe that for $y>y_c$ and small input energies $\Ninmod$ in the neighborhood of the critical
parameters $\tau_c, y_c$ the polynomial obtained by cutting the Taylor series in
Eq.~\eqref{eq:expansionCapGchi} up to the third power in $\frenv$ can develop two extrema, i.e. a maximum
followed by a minimum. Since the expansion given in Eq.~\eqref{eq:expansionCapGchi} can approximate
arbitrary well $\CapGchi$ in the vicinity of $\frenv = 0$ we deduce that the curve of $\CapGchi$ has to
present the two extrema as well. We confirm this numerically as seen Fig.~\ref{fig:capsvswenv} (b).

Note that for the classical additive noise channel $\tau=1$ and $y = \Menv$. Therefore, only by varying the value of $\Menv$ one may realize scenario \ref{itm:oneMax} or scenario \ref{itm:noExt} (monotonous decrease).

The two extrema (maximum and minimum) fall together to one saddle point at some $\frenv>0$ precisely when
\begin{equation}\label{eq:capsaddle}
	\begin{split}
		y = \tilde{y} & = (1-\tilde{\tau})\left(\Menv+\frac{1}{2}\right),\\
		\Menv & \le \Menvcrit = \frac{1}{2} \left[ \left(\sqrt{3} -\frac{2}{\sqrt{5}} \right)^{-1} - 1\right] \approx 0.0969,\\
		\Ninmod & \le \Ninmodcrit = \frac{1}{2} \left[ \sqrt{\frac{3}{2} + \frac{5}{\sqrt{12}}} - 1\right] \approx 0.3578,
	\end{split}	
\end{equation}
where $\tilde{\tau}$ is found by solving numerically $\rm{d} \CapGchi / \rm{d} \frenv = \rm{d}^2 \CapGchi /\rm{d} \frenv^2 = 0$ for $\frenv \in [0,\frthr]$ (see also~\cite{PLM09}). The parameters $\Menvcrit$ and
$\Ninmodcrit$ are the solutions of the equations
\begin{equation}
	\begin{split}
	  	y_c & = \left(1-\tilde{\tau}_R\right)\left(\Menvcrit+\frac{1}{2}\right),\\
		\tau_L & = \tilde{\tau}(\Ninmod=\Ninmodcrit).
	\end{split}	
\end{equation}
The domain $\Menv \le \Menvcrit$ in terms of $y$ corresponds to\\ $y \le (1-\tau)(\Menvcrit+1/2)$, see Figs.~\ref{fig:ytaudivision} and \ref{fig:ytildenmaxeta}. The zones of channel parameters corresponding to the different number of extrema are summarized in Table~\ref{table:wenvbehavior} and plotted in Figs.~\ref{fig:ytaudivision} and \ref{fig:ytildenmaxeta}. For the lossy and additive noise channel this was previously observed
in~\cite{PLM09,SKC10}.

\begin{figure}
	\centering
	\includegraphics[trim=12 0 0 0 mm, clip=true]{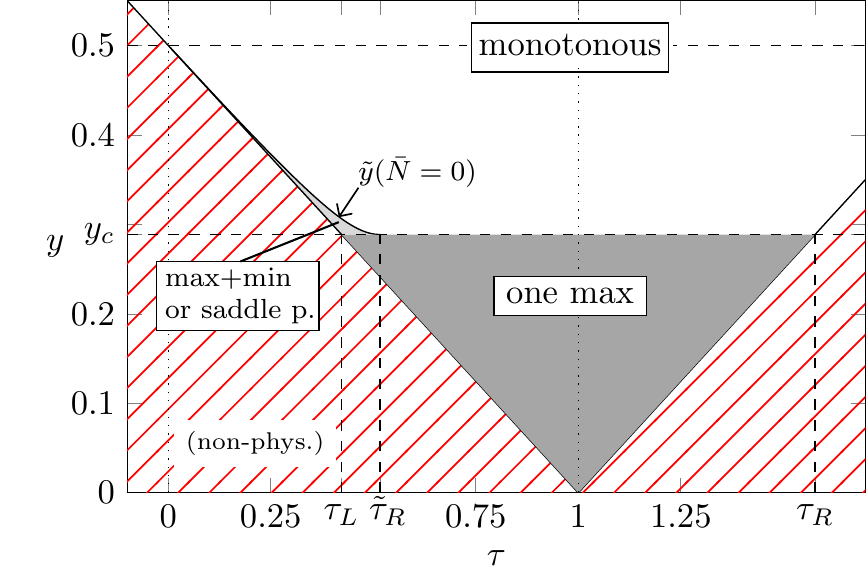}
	\caption{(Color online) Different zones, determined by $y$ and $\tau$, containing the scenarios \ref{itm:noExt} - \ref{itm:maxMin} of the Gaussian capacity $\CapGchi$ with respect to $\frenv$.
	- North-East lines ($y \le |1-\tau|/2$): zone where $y$ is non-physical.
	- Gray triangular region ($y < y_c$): the capacity exhibits one maximum.
	- Light gray region [$y_c < y < \tilde{y}(\Ninmod=0)$]: capacity may exhibit one minimum and one maximum, one saddle point or is monotonously decreasing with $\frenv$ (see Fig.~\ref{fig:ytildenmaxeta} for details).
	- White region ($y \ge \tilde{y}$, $y \ge y_c$): capacity is a monotonously decreasing function of $\frenv$.}
	\label{fig:ytaudivision}
\end{figure}

\begin{table}
	\center
	\begin{tabular}{|c|c||c|c|c|}\hline
	\multicolumn{2}{|c||}{\backslashbox[20mm]{$y \in$}{$\tau \in $}}  		& $(-\infty,0]$						&	$(0,\tilde{\tau}_R)$		&   $[\tilde{\tau}_R,\infty)$ 	\\ \hline\hline
	$[0,y_c)$  									&							& Monotonous						&	One max.					& 	One max.				\\ \hline
	\multirow{3}{*}{$[y_c,\infty)$}				& 	$[y_c,\tilde{y})$ 		& \multirow{3}{*}{Monotonous}		&	One max.+one min.		&	\multirow{3}{*}{Monotonous}							\\ \cline{2-2}\cline{4-4}
	  											&	$y=\tilde{y}$			& 									&   Saddle point				&								\\ \cline{2-2}\cline{4-4}
												&	$(\tilde{y},\infty)$	&									&	Monotonous				 	& \\ \hline
	\end{tabular}
	\caption[Extremality properties of $\CapGchi$ (single-mode) with respect to environment frequency $\frenv$]{Extremality properties of $\CapGchi$ with respect to $\frenv$.}
	\label{table:wenvbehavior}
\end{table}

Summarizing our results, we found the following behavior of the Gaussian capacity for the different channels.
\begin{itemize}
\item For the phase-conjugating channel $y \ge 1/2$ the linear term in the expansion is always
non-negative which means that the limiting value $\log_2(2\Ninmod+1)$ is always approached from below for
$\frenv \to 0$, and furthermore the second derivative of the Gaussian capacity is never zero because we
always have $\tau < 0 < \tau_L$. Hence, there is never a saddle point at $\frenv=0$. Indeed, our numerical
plots show that the capacity of the phase-conjugating channel is always monotonically decreasing
with $\frenv$ (see Fig.~\ref{fig:capsvswenv}).
\item The critical behavior $y = y_c$ can be observed for the amplification channel and the additive noise channel [$\tau \in [1,\infty)$]. However, the condition for a saddle point at $\frenv=0$ cannot be
fulfilled for those channels, since $\tau\ge 1> \tilde{\tau}_R$. In this case we observe numerically only two different behaviors: for $y < y_c$ a maximum appears at a finite $\frenv$ and for $y \ge y_c$ the capacity is always
monotonically increasing.
\item Only the lossy channel $(\tau \in [0,1])$ exhibits all four possible scenarios: in addition to one maximum for $y < y_c$ and a monotonous increasing for $y \ge \tilde{y}$, a saddle point appears when
Eqs.~\eqref{eq:capsaddle} are satisfied and one maximum followed by one minimum is present when inequality $y_c < y < \tilde{y}$ is satisfied.
\end{itemize}
The above results are depicted in Figs.~\ref{fig:ytaudivision} and \ref{fig:ytildenmaxeta}.

\section{Conclusions}\label{sec:conclusions}
We studied the Gaussian capacity of the fiducial Gaussian channel in a wide range of channel parameters including limiting cases. As shown previously our solution provides a way of calculating the Gaussian capacity
of any Gaussian channel for its given canonical decomposition~\cite{SKPPC13}. Two types of the solution are separated by an input energy threshold. If the input energy is above the threshold a thermal state at the
output can be realized, as well as a squeezed input state that exactly matches the squeezing of the environment state. This type of solution was known before. Below this threshold we find new solution given by a
transcendental equation. This type of solution was found first for the lossy and additive noise channels. Here we extended it to all types of Gaussian channels. We provides to a physical interpretation of the
solution choosing a particular representation for the covariance matrices so that the optimal modulation is found by jointly solving a Planck-like equation and a ``resonance'' equation. The ``resonance'' is
achieved exactly at the input energy threshold and holds above it. It provides the optimal input frequency matching the noise frequency, which implies that the modulated output state has a frequency equal to 1.

We investigated the capacity with respect to the noise squeezing for given transmissivity/gain and noise determinant and distinguished between four different behaviors for increasing noise squeezing (or decreasing
noise frequency): monotonically increasing, one maximum, a saddle point or one maximum followed by a minimum. We classified the precise boundaries for this behavior and concluded that only the lossy channel can
exhibit all four behaviors, whereas the capacity of the additive noise channel and amplifier can develop a maximum and the capacity of the phase-conjugating channel is always monotonically increasing with the noise
squeezing. 
\begin{figure}
	\centering
	\includegraphics[trim=15 0 0 0 mm, clip=true]{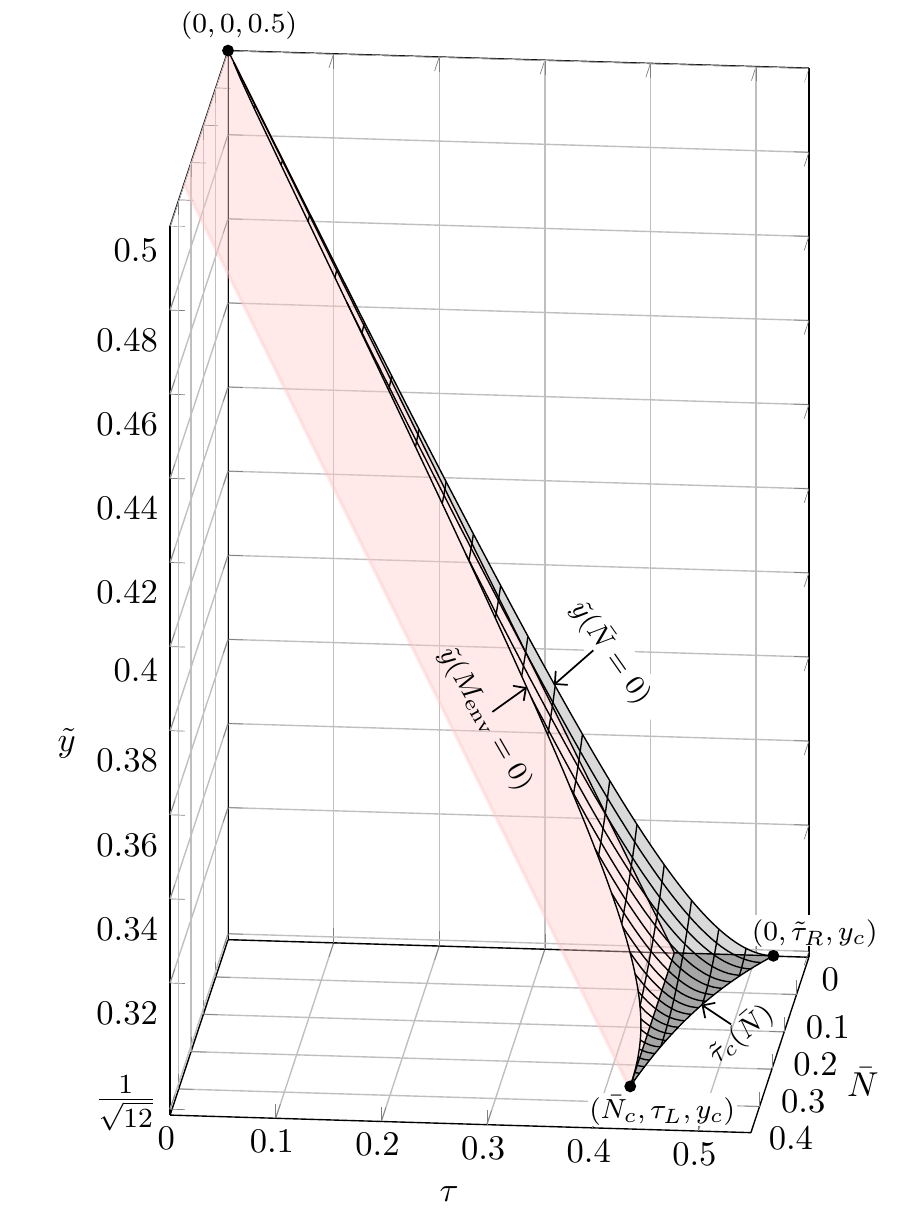}
	\caption{(Color online) Zones of different behavior for the fiducial Gaussian channel parametrized by $\tau$, $y$ and $\Ninmod$.} 
	\label{fig:ytildenmaxeta}
\end{figure}
Furthermore, we found that the Lagrange multiplier can be interpreted as the inverse temperature $\beta$ that enters in the Planck-like equation. From previous studies~\cite{PLM09, SDKC09, SKC11} where the solution
of a system of $n$ single-mode mode channels with joint energy constraint was discussed we know that this multiplier is common for all channels of such a system. This hints to the interpretation of a joint solution
for $n$ channels as corresponding to a kind of \emph{thermal equilibrium}, similar to the case of $n$ connected heat baths with different temperatures which thrive to a joint temperature $\beta^{-1}$.


%

\appendices
\section{Concatenated channels}\label{ap:concatenated}
In order to study the monotonicity of the classical capacity and Gaussian capacity on the parameter $\tau$ the following Lemma will be useful.
\begin{lemPipeline}\label{lemPipeline}
	For given parameters $\tau_1,\tau_2$ satisfying one of the three conditions:
	\begin{enumerate}
		\item $\tau_1,\tau_2 \ge 1$,
		\item $\tau_1,\tau_2 \in [0,1]$,
		\item $\tau_1 < 0, \tau_2 \in [0,1]$,
	\end{enumerate}
	the following equality holds:
	\begin{equation}\label{eq:chPipeline}
		\begin{split}
			& \chPS_{(\tau_2,|1-\tau_2|(\Menv+1/2),\frenv)} \circ \chPS_{(\tau_1,|1-\tau_1|(\Menv+1/2),\frenv)} \\
			&= \chPS_{(\tau_1\tau_2,|1-\tau_1\tau_2|(\Menv+1/2),\frenv)}.
		\end{split}		
	\end{equation}
\end{lemPipeline}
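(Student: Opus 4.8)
The plan is to use the composition rule \eqref{eq:GCH} for Gaussian channels acting on covariance matrices, and then reduce the lemma to an elementary identity among absolute values. Denote by $\X_i,\eff_i$ the matrices \eqref{eq:chPS} associated with $\chPS_{(\tau_i,y_i,\frenv)}$, where $y_i=|1-\tau_i|(\Menv+\tfrac12)$. Composing the two channels (first $\chPS_{(\tau_1,y_1,\frenv)}$, then $\chPS_{(\tau_2,y_2,\frenv)}$) and applying \eqref{eq:GCH} twice, the action on an arbitrary input covariance matrix $\CMin$ is
\[
	\CMout=\X_2\bigl(\X_1\CMin\trans{\X_1}+\eff_1\bigr)\trans{\X_2}+\eff_2
		=\X_{\rm eff}\CMin\trans{\X_{\rm eff}}+\eff_{\rm eff},
\]
with $\X_{\rm eff}=\X_2\X_1$ and $\eff_{\rm eff}=\X_2\eff_1\trans{\X_2}+\eff_2$; since $\dispch=0$ for the fiducial channel, the displacement part is $\dispout=\X_2\X_1\dispin$, again consistent with a fiducial channel. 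Because a single-mode Gaussian channel is completely determined by the triple $(\X,\dispch,\eff)$, it suffices to check that $\X_{\rm eff}$ and $\eff_{\rm eff}$ equal the $\X$- and $\eff$-matrices of $\chPS_{(\tau_1\tau_2,\,|1-\tau_1\tau_2|(\Menv+1/2),\,\frenv)}$.

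First I would compute $\X_{\rm eff}$. Since both $\X_1$ and $\X_2$ are diagonal, their product is diagonal with entries $\sqrt{|\tau_1|}\sqrt{|\tau_2|}=\sqrt{|\tau_1\tau_2|}$ and $\sign(\tau_1)\sqrt{|\tau_1|}\,\sign(\tau_2)\sqrt{|\tau_2|}=\sign(\tau_1\tau_2)\sqrt{|\tau_1\tau_2|}$, so $\X_{\rm eff}$ is exactly $\XPS$ evaluated at $\tau=\tau_1\tau_2$, and this holds with no restriction on $\tau_1,\tau_2$. Next, since $\X_2$ is diagonal, $\X_2\eff_1\trans{\X_2}=|\tau_2|\,y_1\,\diag(\frenv^{-1},\frenv)$, hence
\[
	\eff_{\rm eff}=\bigl(|\tau_2|\,y_1+y_2\bigr)\begin{pmatrix}\frenv^{-1}&0\\0&\frenv\end{pmatrix}.
\]
Thus $\eff_{\rm eff}$ already has the required phase-sensitive form with the same squeezing parameter $\frenv$, and the whole lemma comes down to identifying its prefactor with $y_{\rm eff}=|1-\tau_1\tau_2|(\Menv+\tfrac12)$.

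The crux is therefore the scalar identity obtained after dividing $|\tau_2|\,y_1+y_2$ by the common factor $(\Menv+\tfrac12)$, namely
\[
	|\tau_2|\,|1-\tau_1|+|1-\tau_2|=|1-\tau_1\tau_2|,
\]
which I would verify in each of the three regimes of the hypothesis. If $\tau_1,\tau_2\ge1$, then $|1-\tau_i|=\tau_i-1$ and $|1-\tau_1\tau_2|=\tau_1\tau_2-1$, and the identity reduces to $\tau_2(\tau_1-1)+(\tau_2-1)=\tau_1\tau_2-1$. If $\tau_1,\tau_2\in[0,1]$, then $|1-\tau_i|=1-\tau_i$ and $|1-\tau_1\tau_2|=1-\tau_1\tau_2$, reducing to $\tau_2(1-\tau_1)+(1-\tau_2)=1-\tau_1\tau_2$. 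If $\tau_1<0$ and $\tau_2\in[0,1]$, then $|\tau_2|=\tau_2$, $|1-\tau_1|=1-\tau_1$, $|1-\tau_2|=1-\tau_2$, and $\tau_1\tau_2\le0$ gives $|1-\tau_1\tau_2|=1-\tau_1\tau_2$, again reducing to the same linear identity. Each case is a one-line check, which establishes $\eff_{\rm eff}=\effPS$ evaluated at $(\tau_1\tau_2,\,|1-\tau_1\tau_2|(\Menv+1/2),\,\frenv)$ and completes the proof.

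I do not expect a genuine obstacle here: the content is purely the bookkeeping of signs, and the three listed hypotheses are exactly the regimes in which the absolute-value identity survives — a mixed regime such as $\tau_1>1$, $\tau_2\in[0,1]$ would break it and is correctly excluded. The only mild subtlety worth a remark is the degenerate sub-case $\tau_1\tau_2=1$ (which can occur at the boundary of regime~1 or~2), where $y_{\rm eff}=0$; there the composed channel is the perfect-transmission channel, consistent with reading the pair $(\tau=1,\,y_{\rm eff}=0)$ through the $\tau=1$ branch of \eqref{eq:ybytauMenv} with $\Menv=0$, so the statement remains correct as written.
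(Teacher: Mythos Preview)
Your proof is correct and follows essentially the same approach as the paper: compose the two channels at the level of $(\X,\eff)$ and reduce to the scalar identity $|\tau_2|\,|1-\tau_1|+|1-\tau_2|=|1-\tau_1\tau_2|$ under the three sign regimes. The paper's own proof is in fact terser---it encodes the $\X$-part implicitly via $\CMin^{(\pm)}$ and declares the absolute-value identity ``straightforward''---so your explicit case-by-case check and the remark on the degenerate case $\tau_1\tau_2=1$ only add clarity.
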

\begin{proof}
	In the following proof we use the parametrization 
	\begin{equation}
		\effPS = |1-\tau|\CMenv, \quad \CMenv = \left(\Menv+\frac{1}{2}\right)\begin{pmatrix}
			\frenv^{-1} & 0\\0 & \frenv
		\end{pmatrix}.
	\end{equation}
	The total output covariance matrix of the concatenated channel stated on the left hand side of Eq.~\eqref{eq:chPipeline} reads
	\begin{equation}\label{eq:CMoutLemma}
		\begin{split}
			\CMout 	& = |\tau_2|\left(|\tau_1| \CMin^{(\pm)} + |1-\tau_1| \CMenv\right) + |1-\tau_2| \CMenv,\\
					& = |\tau_1\tau_2| \CMin^{(\pm)} + \left( |\tau_2| |1-\tau_1| + |1-\tau_2|\right) \CMenv ,
		\end{split}
	\end{equation}
	where $\CMin^{(+)} = \CMin$ and $\CMin^{(-)}$ has an additional minus sign in front of its off-diagonal elements. It is straight forward to show that for all three conditions stated in the Lemma the second line of Eq.~\eqref{eq:CMoutLemma} can be simplified to
	\begin{equation}
		\CMout = |\tau_1\tau_2| \CMin^{(\pm)} + |1-\tau_1\tau_2| \CMenv,
	\end{equation}
	which completes the proof.
\end{proof}



\ifCLASSOPTIONcaptionsoff
  \newpage
\fi

\end{document}